\newcommand{\col}{\mbox{col}}
\def\hal{{1 \over 2}}
\def\Linf{{\cal L}_\infty}
\def\L2e{{\cal L}_{2e}}
\def\rea{\mathbb{R}}
\def\diag{\mbox{diag}}
\def\begequarr{\begin{eqnarray}}
\def\endequarr{\end{eqnarray}}
\def\begequarrs{\begin{eqnarray*}}
\def\endequarrs{\end{eqnarray*}}
\def\begarr{\begin{array}}
\def\endarr{\end{array}}
\def\begequ{\begin{equation}}
\def\endequ{\end{equation}}
\def\lab{\label}
\def\begdes{\begin{description}}
\def\enddes{\end{description}}
\def\begenu{\begin{enumerate}}
\def\begite{\begin{itemize}}
\def\endite{\end{itemize}}
\def\endenu{\end{enumerate}}
\def\lef[{\left[\begin{array}}
\def\rig]{\end{array}\right]}
\def\qed{\hfill$\Box \Box \Box$}
\def\begcen{\begin{center}}
\def\endcen{\end{center}}
\def\begrem{\begin{remark}\rm}
\def\endrem{\end{remark}}
\newcommand{\enorm}[1]{\left| #1 \right|}
\newcommand{\lpnorm}[1]{\left\lVert #1 \right\rVert_\infty}
\def\TAC{{\it IEEE Transactions of Automatic Control}}
\def\AUT{{\it Automatica}}
\def\SCL{{\it Systems and Control Letters}}
\def\IJACSP{{\it Int. J. on Adaptive Control and Signal Processing}}
\def\et{\epsilon_t}
\def\diag{\mbox{diag}}
\def\rea{\mathbb{R}}
\newtheorem{assumption}{Assumption}
\newtheorem{proposition}{Proposition}
\newtheorem{lemma}{Lemma}
\newtheorem{remark}{Remark}
\newtheorem{corollary}{Corollary}
\def\qmx#1{\begin{bmatrix}#1\end{bmatrix}}
\def\beal#1{\begin{align}{#1}\end{align}}
\def\beals#1{\begin{align*}{#1}\end{align*}}
\begin{document}
%
\title {A Robust Adaptive Flux Observer for a Class of Electromechanical Systems}

\author{Anton~Pyrkin,~\IEEEmembership{Member,~IEEE,}
Alexey~Vedyakov,~\IEEEmembership{Member,~IEEE,}
Romeo~Ortega,~\IEEEmembership{Fellow Member,~IEEE,}
Alexey~Bobtsov,~\IEEEmembership{Senior Member,~IEEE,}

\thanks{A. Pyrkin, A. Vedyakov, A. Bobtsov are with the Department of Control Systems and Informatics, ITMO University, Kronverkskiy av. 49, Saint Petersburg, 197101, Russia, e-mail: a.pyrkin@gmail.com}%
\thanks{Romeo~Ortega is with the LSS-CentraleSupelec, 3, Rue Joliot-Curie, 91192 Gif– sur–Yvette, France, e-mail:~ortega@lss.supelec.fr}
\thanks{This work is supported by the Russian Science Foundation grant (project 17-79-20341).}}
 
\maketitle
\begin{abstract}
The problem of designing a flux observer for magnetic field electromechanical systems from noise corrupted measurements of currents and voltages is addressed in this paper. Imposing a constraint on the systems magnetic energy function, which allows us to construct an algebraic relation between fluxes and measured voltages and currents that is independent of the mechanical coordinates, we identify a class of systems for which a globally convergent adaptive observer can be designed. A new adaptive observer design technique that effectively exploits the aforementioned algebraic relation is proposed and successfully applied to the practically important examples of permanent magnet synchronous motors and magnetic levitation systems.
\end{abstract}

\begin{IEEEkeywords}
Nonlinear observer, flux observer, electromechanical systems
\end{IEEEkeywords}
%
\section{Introduction}
\lab{sec1}
%
High performance regulation of electromechanical systems usually requires the knowledge of some state variables that are not easy to measure. For instance, in AC electrical machines the de-facto standard controller is the so-called ``flux-orientation strategy" \cite{LEO}, that relies on the availability of the magnetic flux stored in the inductors, which is difficult to measure. Another example is magnetic levitation (MagLev) systems where measurement of position of the levitated object is of paramount importance but existing position sensors have low reliability and their cost is extremely high \cite{SCHMAS}. Actually, because of cost and dependability issues, one of the most challenging problems in a large class of electromechanical systems is the development of sensorless (also called self-sensing) controllers for them, by which it is understood that the control algorithms are based only on measurements of currents and voltages \cite{NAM}.   

The control community has been very active in the development of observers and partial state-feedback controllers for electromechanical systems, with main emphasis in electrical machines. A large number of articles and monographs written by control specialists have appeared in the last few years---see, for instance, \cite{BERPRA,BOBetalaut,ORTbook,VERbook} and references therein---with some of these developments having penetrated engineering practice. State estimation and sensorless control for permanent magnet synchronous motors (PMSM)  has been considered in various publications, including \cite{BOBetalaut,MALetal,ORTetalcst,TOMVER} with some interesting experimental evidence reported in \cite{CHOetal,VERetal}. For the case of MagLev systems the problem has recently been addressed in \cite{BOBetalmaglev}, see also \cite{SCHMAS} for some interesting technology-based solutions. 

In all the publications above it assumed that  there is no noise present in the measurement of currents and voltages, which is an unrealistic assumption in most applications. In this paper we are interested in the problem of robust estimation of the flux in a class of electromagnetic systems (including PMSM and MagLev systems), with the qualifier robust added to mean that the measurements are corrupted by noise, represented by an unknown, constant, bias. This situation, which is very common in electromechanical systems, is particularly critical for the developments reported in \cite{BOBetalaut,BOBetalmaglev} where the parameter estimation-based observer (PEBO) design technique, recently reported in \cite{ORTetalscl}, is used. As explained in Subsection \ref{subsec23} below, the presence of a constant bias in the measurements renders inapplicable this technique---see also Remark \ref{rem3}.  To solve the robust flux observation problem we impose in this paper and additional assumption on the magnetic energy function of the system, which allows us to construct a quadratic, algebraic relation between the unknown fluxes and the measured voltages and currents that is independent of the mechanical coordinates. The main outcome of this assumption is that, after a series of algebraic and filtering operations, it is possible to generate a linear regression model for the unknown flux and some constant parameters related to the measurement bias terms. Equipped with this regression model we can apply standard estimation techniques to design observers for the flux.\footnote{See \cite{FORetal} where a similar ``identification-based" approach is pursued within the context of robust output regulation.} In its simpler version, we estimate only the flux and treat the unknown parameters as additive disturbances---ensuring in this case, ultimate boundedness of the observation error. Furthermore, we propose an adaptive observer that estimates the unknown parameters to ensure global convergence of the flux observation error. As expected, both results rely on the assumption that the corresponding regressor signals satisfy a persistency of excitation (PE) assumption \cite{LJU,SASBOD}.   

The use of an additional algebraic constraint for flux observation was first exploited for PMSM in \cite{ORTetalcst}, yielding a locally convergent design based on a gradient descent. The observer was later rendered global in \cite{MALetal} and adaptive in  \cite{BERPRA,ROMetal}. This technique was later combined with PEBO in  \cite{BOBetalaut,BOBetalmaglev} where the potential robustness problem mentioned above was observed. The starting point for the developments reported here is a proposal made in \cite{BERPRA} to differentiate the quadratic algebraic constraint to obtain a linear relation and then use filtering techniques to remove the derivative terms.  One of the motivations for the analysis carried-out in \cite{BERPRA} was to get a better understanding of PEBO, with one of the outcomes being the establishment of a relationship between the gradient descent-based approach and PEBO. Since in the gradient descent-based approach no open-loop integrations are used, the potential drift instability phenomenon is avoided.\\ 

The two main contributions of  this paper are the following.
\begite
\item[C1] Extension of the  ``differentiation plus filtering" technique proposed in \cite{BERPRA} for the noise-free PMSM example, to a broader class of electromechanical systems---that includes MagLev systems. The class is identified imposing some constraints on the systems electrical energy function.
\item[C2]  The inclusion of additional filtering and nonlinear operation steps to treat the case of noisy measurements.
\endite
  
The remainder of the paper is organized as follows. Section \ref{sec2} gives the problem formulation.  In Section \ref{sec3} we identify the class of systems considered in the paper and give a key technical lemma.  In Section \ref{sec4} we present the two robust flux observers that we apply to the PMSM and the Maglev systems in Section \ref{sec5}, with some simulation results given for the latter. The paper is wrapped-up with concluding remarks and future research directions in  Section \ref{sec6}.\\

\noindent {\bf Notation.} ${\bf I}_n$ is the $n \times n$ identity matrix, ${\bf 0}_{n \times s}$ is an $n \times s$ matrix of zeros and ${\bf 1}_{n}$ is an $n$-dimensional vector of ones.  For $b \in \rea^n$, $A \in \rea^{n \times n}$ and $x:\rea_+ \to \rea^n$ we denote the Euclidean norm as $|b|$, the induced matrix norm as $\|A\|$,  the minimum and maximum eigenvalues as $\lambda_{\tt min}\{A\}$ and $\lambda_{\tt max}\{A\}$, respectively, and the $\Linf$-norm as $\|x\|_\infty$. All mappings are assumed smooth. Given a function $f:  \rea^n \to \rea$, we define the differential operator $\nabla f(x):=\left(\frac{\displaystyle \partial f }{\displaystyle \partial x}\right)^\top $ and for functions $g:  \rea^n \times \rea^m\to \rea$ we define  $\nabla_y g(x,y):=\left(\frac{\displaystyle \partial g }{\displaystyle \partial y}\right)^\top $. 

\section{Problem Formulation}
\lab{sec2}
%
In this section we formulate the robust observation problem addressed in the paper. First, the mathematical model of the system that we consider is presented. Then, the robust observation scenario adopted in the paper is given. Finally, the motivation to consider this scenario is discussed.
\subsection{The class of electromechanical systems}
\lab{subsec21}
%
In this paper we consider multiport magnetic field electromechanical systems consisting of $n_{E}$ magnetic ports  and $n_M$ mechanical ports as defined in \cite{MEI}. The electrical port variables are $(v,i)$, where $v \in \rea^{n_{E}}$ are the voltages and $i \in \rea^{n_{E}}$ the currents. The mechanical port variables are  $(F,\dot q)$, where $F\in \rea^{n_M}$ are the mechanical forces of electrical origin and  $\dot q \in \rea^{n_M}$ the (rotational or translational) velocities of the movable mechanical elements.  There are also (electrical and mechanical) dissipation elements, $m$ external voltage sources through which electrical energy is supplied to the magnetic elements that will be denoted $u \in \rea^m$, with $m \leq n_E$, and $n_M$ external, constant, (load) forces in the mechanical part, which are denoted $F_L \in \rea^{n_M}$.   

The magnetic energy stored in the inductances is defined by the function $H_E(\lambda,q)$, where $\lambda \in \rea^{n_M}$ is the vector of flux linkages. The mechanical energy stored by the movable part inertias is given by the function
$$
H_M(q,p)  =  \hal p^\top  M^{-1}(q) p+ U(q),
$$
where $p \in \rea^{n_M}$ are the momenta of the masses, $M(q) \in \rea^{n_M \times n_M}$ is the positive-definite, inertia matrix and $U(q) \in \rea$ denotes the potential energy function. As will become clear below, the mechanical dynamics plays no role in the solution of the flux observation problem and is given here only to complete the mathematical model of the system. 

The constitutive relations of the elements are
\begequarrs
i & = & \nabla_\lambda H_E(\lambda,q)\\
F & = & - \nabla_q H_E(\lambda,q)\\
\dot q & = &  \nabla_p H_M(q,p),
\endequarrs 
where the minus sign in the force of electrical origin of the second equation reflects Newton's third law. The equations of motion of the system can be described in port-Hamiltonian form as
\begin{align}
\qmx{\dot \lambda \\ \dot q \\ \dot p} & =\qmx{-R & {\bf 0}_{n_E \times n_M} & {\bf 0}_{n_E \times n_M} \\ {\bf 0}_{n_M \times n_E} & {\bf 0}_{n_M \times n_M} & {\bf I}_{n_M} \\ {\bf 0}_{n_M \times n_E} & -{\bf I}_{n_M} & -R_M} \nabla H(\lambda,q,p) 
\nonumber \\
& \quad + \qmx{Bu \\ {\bf 0}_{n_M} \\ F_L}
\lab{sys}
\end{align}
where we have defined the systems total energy function
$$
H(\lambda,q,p):= H_E(\lambda,q)+H_M(q,p)
$$
and we assumed the presence of resistive elements in series with the inductances, with $R \in \rea^{n_E \times n_E}$ the positive semi-definite resistor matrix, and the $n_M \times n_M$, positive semi-definite matrix $R_M$ accounts for the Coulomb friction effects, $B \in \rea^{n_E \times m}$ is a constant, full rank, input matrix to the electrical ports and we have included the external voltages $u$ and mechanical forces $F_L$. 

We bring to the readers attention the important fact that the flux dynamics can be rewriten as
\begequ
\lab{dotlam}
\dot \lambda  =  -R i + B u.
\endequ
Hence, if the electrical port variables $(Bu,i)$ are measurable, the derivative of the flux is known.

The class of systems described by \eqref{sys} is quite large and contains, as particular cases, many electrical machines and magnetic levitation systems---see Section \ref{sec5}. The description of the mathematical model given above is quite succinct, the interested reader is referred to \cite{MEI,STRDUI} for more details on modelling of general electromechanical systems and to \cite{LIUetal,ORTbook} for the particular case of electrical machines.

\begrem
\lab{rem1}
A particular case of electrical energy function, that contains the PMSM and MagLev systems studied in Section \ref{sec5}, is given by 
\begequ
\lab{parhe}
 H_E(\lambda,q)  =  \hal [\lambda - \mu(q)]^\top  L^{-1}(q)  [\lambda - \mu(q)],
\endequ
where $L(q) \in \rea^{n_E \times n_E}$ is the positive-definite inductance matrix and the vector $\mu(q) \in \rea^{n_E}$ represents the flux linkages due to permanent magnets. For this case the electrical constitutive relation reduces to
\begequ
\lab{lami}
\lambda  =  L(q) i  + \mu(q).
\endequ
\endrem   

\subsection{Formulation of the robust flux observation problem}
\lab{subsec22}
%
Consider the electromechanical system \eqref{sys}, where the form\footnote{See Assumption \ref{ass1} for an exact description of the required prior knowledge of this function.}  of the energy function $H_E(\lambda,q)$ and the matrices $R$ and $B$ are known.  Assume the only signals available for measurement are the currents $i$ and the voltage sources $u$, which are corrupted by constant unknown bias  terms $\delta_i \in \rea^{n_E}$ and $\delta_u \in \rea^{m}$, respectively, that is
\begequarr
 \nonumber
 i_m & = &  i  + \delta_i\\
u_m & = & u +  \delta_u,
\lab{mea}
\endequarr
with $i_m$ and $u_m$ the actual measured signals. Design a dynamical system
\begequarrs
\nonumber
\dot{\chi} & = & G(\chi,i_m, u_m)\\
\hat \lambda & = & H(\chi,i_m,u_m)
\label{dynobs}
\endequarrs
with $\chi \in \rea^{n_\chi}$, such that $\chi$ is bounded and
$$
\lim_{t\to\infty} |\tilde \lambda(t)|=0,
$$
for all initial conditions $(\lambda(0),q(0),p(0),\chi(0))$, where $\tilde \lambda:=\hat \lambda - \lambda$. As usual in observer applications we assume that  $u$ and $F_L$ are such that the systems state trajectories are bounded.
   
\subsection{Motivation for the robust observation problem scenario}
\lab{subsec23}
%
The non-robust version of the flux observation problem of Subsection \ref{subsec22} has been solved for several practical examples using the PEBO design technique reported in \cite{ORTetalscl}---that is, when it is assumed that $\delta_i=\delta_u=0$.  Indeed, in \cite{BOBetalijacsp} an adaptive flux observer for PMSMs with unknown resistance and inductance, but with known velocity, is proposed. In \cite{BOBetalaut} a sensorless controller (with known electrical parameters) is reported. Finally, the open problem of sensorless control of MagLev systems is solved in \cite{BOBetalmaglev}. In all these examples the use of a PEBO was instrumental to solve the problem. The main drawback of PEBO is that it relies on an open-loop integration, which in the aforementioned examples, is of the form
$$
\dot \xi =  -R i + u.
$$
Simple integration shows that, under the assumption of ideal measurements of $i$ and $u$, the following relation is established
$$
\lambda(t)=\xi(t)+\theta_0,
$$
with $\theta_0:=\lambda(0)-\xi(0)$. The essence of PEBO is to treat $\theta_0$ as an unknown constant parameter that is estimated with some parameter estimation algorithm. The observed flux is then obtained via
$$
\hat \lambda=\xi+\hat \theta_0,
$$
with  $\hat \theta_0$ being the estimated parameter. Clearly, a consistent estimation of this parameter yields asymptotic reconstruction of the flux. 

Unfortunately,  the open-loop integration might be a problematic operation in practice. For instance, in the biased measurement scenario described above, it gives rise to unbounded signals, styming the direct application of the PEBO technique. To propose a solution to this problem is the main motivation of the present  work.

\begrem
\lab{rem2}
As shown in \cite{ORTetalscl} PEBO relies on the solution of a partial differential equation that transforms the system into a suitable cascaded form. In the present case this step is obviated since, as seen from \eqref{dotlam}, the derivative of the part of the state to be estimated, {\em e.g.}, $\lambda$, is measurable. However, to avoid the drift problem mentioned above---that arises in the presence of (constant) measurement disturbances---we need to exploit additional properties of the system to design the robust observer. This condition is articulated in the assumption of the next section.   
\endrem

\begrem
\lab{rem3}
It should be mentioned that, in spite of the potential problem of open-loop integration of noisy signals,  with some additional ``safety nets" similar to the ones used in PID and adaptive control, PEBO has proven effective in the solution of various physical systems problems---and its performance has been validated experimentally \cite{BOBetalaut,BOBetalmaglev,CHOetal,PYRetal}.
\endrem
%
\section{Key Assumption and Main Technical Lemma}
\lab{sec3}
%
As indicated in the introduction, to solve the robust flux observation problem of Subsection \ref{subsec22} it is necessary to impose and additional assumption on the electrical subsystem \eqref{sys}, which is given in this section. This assumption, pertains only to the electrical energy function $H_E(\lambda,q)$ and it allows us to construct a quadratic, algebraic relation between the unknown fluxes $\lambda$ and the measured voltages $u_m$ and currents $i_m$ that is independent of the mechanical coordinates $q$.  Equipped with this assumption it is possible to generate a linear regression model for the unknown flux and a constant parameter related to the bias term $\delta_i$ and $\delta_u$. 
\subsection{Identification of the admissible systems}
\lab{subsec31}
%
We are in position to identify the class of systems that are considered in the paper.

\begin{assumption}
\lab{ass1} \em
Consider the electromechanical system \eqref{sys}. The electrical energy function $H_E(\lambda,q)$ is such that there exists  three matrices $Q_i \in \rea^{n_E \times n_E},\;i=1,2,3,$ a vector $C \in \rea^{n_E}$ and a scalar $d \in \rea$, all of them known and constant, such that the following algebraic relation holds
\begin{align}
	\lab{w}
	0 & = \lambda^\top \,Q_1\,\lambda+\lambda^\top  Q_2 i + i^\top  Q_3 i + C^\top  i + d
	\nonumber \\
	& =: w(\lambda,i),
\end{align}
where we recall that $i$ is given by
$$
i  =  \nabla_\lambda H_E(\lambda,q).
$$ 
\qed
\end{assumption}
 
In the robustness scenario considered in Subsection \ref{subsec22} the currents $i$ are perturbed by a constant bias. Therefore, to use the algebraic constraint \eqref{w} in the design it is necessary to express it in terms of the measurable currents $i_m$ that, as expected, gives rise to the appearance of unknown constant parameters. This result is contained in the corollary below. To simplify the notation, we refer in the sequel to ``known functions of time", meaning by that the notation 
$$
(\cdot)(t):=(\cdot)(i_m(t)).
$$ 
When clear from the context the time argument is omitted.

\begin{corollary}\em
\lab{cor1}
The algebraic constraint \eqref{w}, evaluated at $i=i_m-\delta_i$, may be expressed in the form
\begin{align}
\lab{wim}
w(\lambda,i_m-\delta_i) & = \lambda^\top \,Q_1\,\lambda + \lambda^\top  [y_a(t)+ \theta_{y_a}]
\nonumber \\
& \quad + y_b^\top (t)\theta_{y_b}+y_c(t)+d,
\end{align}
where $y_a \in \rea^{n_E},\;y_b\in \rea^{n_E+1}$ and $y_c \in \rea$ are known functions of time given by
\begin{align}
\lab{yabc}
y_a &:=Q_2i_m \nonumber \\
y_b &:=\qmx{ i_m \\ 1} \\
y_c &:=i_m^\top Q_3 i_m+C^\top i_m, \nonumber
\end{align}
and $\theta_{y_a}\in \rea^{n_E},\;\theta_{y_b}\in \rea^{n_E+1}$ are unknown, constant vectors.
\end{corollary}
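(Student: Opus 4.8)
The plan is to prove the corollary by direct substitution of $i = i_m - \delta_i$ into the quadratic form $w(\lambda,i)$ of \eqref{w}, followed by a careful bookkeeping of which resulting terms depend on the unknown bias $\delta_i$ and which depend only on the known measured current $i_m$. The key structural observation is that, since only the current is corrupted, the flux $\lambda$ enters exactly as before: the quadratic term $\lambda^\top Q_1 \lambda$ is untouched, and the bilinear term $\lambda^\top Q_2 i$ is the only place where $\lambda$ meets the bias. Consequently the quadratic-in-$\lambda$ and linear-in-$\lambda$ structure of \eqref{wim} is inherited directly, and all $\delta_i$-dependent contributions collapse into \emph{constant} unknown parameters.

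First I would expand the bilinear term as $\lambda^\top Q_2(i_m - \delta_i) = \lambda^\top(Q_2 i_m) - \lambda^\top(Q_2\delta_i)$. Setting $y_a := Q_2 i_m$ and $\theta_{y_a} := -Q_2\delta_i$ immediately produces the term $\lambda^\top[y_a + \theta_{y_a}]$, with $y_a$ a known function of time and $\theta_{y_a}$ an unknown constant vector, since $\delta_i$ is constant. This matches the first line of \eqref{wim} and the definition of $y_a$ in \eqref{yabc}.

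Next I would expand the purely current-dependent terms $i^\top Q_3 i + C^\top i + d$ at $i = i_m - \delta_i$. The quadratic piece yields $i_m^\top Q_3 i_m$ (known), the cross contribution $-\delta_i^\top(Q_3 + Q_3^\top)i_m$ (linear in $i_m$ with unknown coefficient), and $\delta_i^\top Q_3\delta_i$ (unknown scalar constant); the linear piece yields $C^\top i_m$ (known) and $-C^\top\delta_i$ (unknown scalar constant). Collecting the two known pieces into $y_c := i_m^\top Q_3 i_m + C^\top i_m$ and retaining $d$ reproduces the remaining scalar definitions of \eqref{yabc}. The two unknown contributions are then grouped against the augmented regressor $y_b = \qmx{i_m \\ 1}$ by taking the top $n_E$ entries of $\theta_{y_b}$ to be $-(Q_3+Q_3^\top)\delta_i$ and its last entry to be $\delta_i^\top Q_3\delta_i - C^\top\delta_i$, so that $y_b^\top\theta_{y_b}$ absorbs both the linear-in-$i_m$ term and the two scalar constants; this $\theta_{y_b}$ is again an unknown constant vector.

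There is no genuine obstacle here, as the statement is an identity obtained by substitution and regrouping. The only point requiring a moment's care is the cross terms in $i^\top Q_3 i$: one must combine $-i_m^\top Q_3\delta_i - \delta_i^\top Q_3 i_m$ as $-\delta_i^\top(Q_3 + Q_3^\top)i_m$ \emph{without} assuming $Q_3$ symmetric, using that $(Q_3+Q_3^\top)$ is symmetric to rewrite this as $i_m^\top[-(Q_3+Q_3^\top)\delta_i]$; this symmetrized coefficient is exactly what is folded into the top block of $\theta_{y_b}$. Assembling the four groups reproduces \eqref{wim} term by term, which completes the proof.
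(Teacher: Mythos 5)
Your proof is correct and follows exactly the paper's route: a direct substitution of $i=i_m-\delta_i$ into \eqref{w} followed by regrouping the bias-dependent contributions into constant unknown vectors. The only point worth noting is that your explicit identifications $\theta_{y_a}=-Q_2\delta_i$ and $\theta_{y_b}=\col\left(-(Q_3+Q_3^\top)\delta_i,\;\delta_i^\top Q_3\delta_i-C^\top\delta_i\right)$ differ in sign (and in the symmetrization of $Q_3$) from the paper's \eqref{theyab}; since the corollary only asserts that these are unknown constant vectors the discrepancy is immaterial to the statement, and your expressions are in fact the algebraically consistent ones without assuming $Q_3$ symmetric.
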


\begin{proof}
The proof proceeds via direct calculation with \eqref{yabc} and the following definition of the unknown parameters
\begequ
\lab{theyab}
\theta_{y_a}:=Q_2 \delta_i,\;\theta_{y_b}:=\qmx{2Q_3\delta_i  \\ \delta_i^\top Q_3 \delta_i+C^\top  \delta_i}.
\endequ
\end{proof}

\begrem
\lab{rem4}
It is hard to give a physical interpretation to Assumption \ref{ass1}. However, as shown in  Section \ref{sec5}, it turns out to be verified by PMSM and MagLev systems and, as recently proven in \cite{ORTBOB}, it also holds true for the saturated model of the switched reluctance motor. See also \cite{BOBetalaut,BOBetalmaglev} where the existence of a relationship of the form \eqref{w} is exploited in the context of PEBO without measurement noise.
\endrem
\subsection{Generation of the linear regression}
\lab{subsec32}
%
In this subsection we prove that for systems verifying Assumption \ref{ass1} it is possible to generate a linear regression model to which we apply standard estimation techniques to design (robust and adaptive) flux observers in Section \ref{sec4}. 

To streamline the presentation of the result we fix a constant $\nu>0$ and introduce the following filters
\beal{
\nonumber
\dot\xi_1&=-\nu\xi_1+\nu y_m\\
\nonumber\dot\xi_2&=-\nu\xi_2+2\nu Q_1y_m-\nu^2 y_a\\
\nonumber\dot\xi_3&=-\nu\xi_3+\nu y_b\\
\nonumber\dot\xi_4&=-\nu\xi_4+\xi_2+ 2Q_1 y_m\\
\nonumber
\dot\xi_5&=-\nu\xi_5+y_m^\top \xi_2+\nu^2 y_c\\
\nonumber\dot\xi_6&=-\nu\xi_6+\nu\xi_4-\xi_2\\
\nonumber\dot\xi_7&=-\nu\xi_7+\nu \xi_1\\
\nonumber\dot\xi_{8}&=-\nu\xi_{8}+\nu [y_b-\xi_3]\\
\lab{fil}
\dot\xi_{9}&=-\nu\xi_{9}+\nu \xi_5-\nu^2 y_c+y_m^\top [\nu\xi_4-\xi_2],
} 
that operate on the known signals
\begequ
\lab{ym}
y_m:=-Ri_m+Bu_m,
\endequ
and $y_a,y_b,y_c$ defined in \eqref{yabc}. Also, we define the signals
\beal{
\nonumber
y&:=\xi_5-\nu y_c-\xi_9 \in \rea\\
\nonumber
\Phi_\lambda&:= 2\xi_2+y_a-\nu\xi_4 \in \rea^{n_E}\\
\lab{yphiphi}
\Phi_\theta&:=\qmx{ 2\xi_6\\
\xi_1-\xi_7\\
\nu[y_b-\xi_3-\xi_8]\\
1
} \in \rea^{n_\theta},
}
with $n_\theta:=3n_E + 2$.

The lemma below is instrumental for the design of the robust flux observers. The proof, being technically involved, is given in Appendix A.

\begin{lemma}\em
\lab{lem1}
Consider the electromechanical system \eqref{sys} verifying Assumption \ref{ass1} with the measurable signals \eqref{mea}. The following linear regression model holds
\begin{align}
\nonumber
\dot \lambda &=-Ri_m+Bu_m + \theta_m \\
\lab{linear0}
y&=\Phi_\lambda^\top\,\lambda+\Phi_\theta^\top\,\theta+\et,
\end{align}
where the known functions  $y$, $\Phi_\lambda$, and $\Phi_\theta$ are defined via \eqref{yabc}-\eqref{yphiphi}, we introduced the two unknown, constant vectors
\begequ
\lab{the}
\theta_m:=R\delta_i-B\delta_u \in \rea^{n_E},\;\theta:=\qmx{\theta_m \\ \theta_{y_a} \\ \theta_{y_b} \\ {2 \over \nu}\theta_m^\top  Q_1\theta_m} \in \rea^{n_\theta},
\endequ
and $\et$ is an exponentially decaying signal.\footnote{In the sequel we will use the symbol $\et$ to (generically) denote signals bounded by exponentially decaying functions of time. When clear from the context the symbol will be omitted.}

\qed
\end{lemma}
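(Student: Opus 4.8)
The plan is to establish the two identities in \eqref{linear0} separately, the first being immediate and the second carrying all the technical weight. For the flux dynamics I would simply substitute the biased measurements \eqref{mea} into \eqref{dotlam}: writing $i=i_m-\delta_i$ and $u=u_m-\delta_u$ gives $\dot\lambda=-Ri_m+Bu_m+(R\delta_i-B\delta_u)=y_m+\theta_m$, with $y_m$ as in \eqref{ym} and $\theta_m$ as in \eqref{the}. Everything else concerns the scalar regression $y=\Phi_\lambda^\top\lambda+\Phi_\theta^\top\theta+\et$, whose starting point is the mechanical‑coordinate‑free quadratic constraint \eqref{wim} of Corollary \ref{cor1}.

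The two obstructions in \eqref{wim} are the quadratic term $\lambda^\top Q_1\lambda$ and the fact that the regressors $y_a,y_b,y_c$ of \eqref{yabc} are functions of the noisy $i_m$ whose time derivatives are not measurable. I would follow the ``differentiation plus filtering'' idea. First differentiate \eqref{wim} along trajectories; since $w\equiv 0$ one has $\dot w\equiv 0$, the constant $d$ drops out, and the quadratic term becomes $\frac{d}{dt}(\lambda^\top Q_1\lambda)=2\lambda^\top Q_1\dot\lambda=2\lambda^\top Q_1(y_m+\theta_m)$, i.e.\ affine in $\lambda$. Then pass every term through the stable first‑order filter common to all of \eqref{fil}; denote its action by $\calf[\cdot]$, the output of $\dot\eta=-\nu\eta+\nu(\cdot)$, so that $\xi_1=\calf[y_m]$, $\xi_3=\calf[y_b]$, $\xi_7=\calf[\xi_1]$, $\xi_8=\calf[y_b-\xi_3]$, and so on. The engine of the construction is the operational identity $\calf[\dot z]=\nu z-\nu\calf[z]+\et$ (the $\et$ accounting for filter initial conditions), which trades each unmeasurable derivative $\dot z$ for $\nu$ times the present value $z$ minus a filtered copy of $z$. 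The ``present value'' pieces are exactly what allow $\lambda$ itself—rather than a filtered version—to appear as the regressor, and the occurrence of $y_a$ in $\Phi_\lambda=2\xi_2+y_a-\nu\xi_4$ is the direct fingerprint of applying this identity to $\frac{d}{dt}(\lambda^\top y_a)$.

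The main difficulty, and the step I expect to be the real obstacle, is the bookkeeping of the unknown‑times‑unknown cross terms and of the filtered products of $\lambda$ with the time‑varying signals. Inserting $\dot\lambda=y_m+\theta_m$ produces the bilinear term $2\lambda^\top Q_1\theta_m$, which is not a legitimate linear‑regression term. I would dispose of it by feeding back the filtered flux dynamics: applying $\calf$ to $\dot\lambda=y_m+\theta_m$ and using the same identity gives $\nu\lambda=\nu\calf[\lambda]+\xi_1+\theta_m+\et$, which splits the bilinear term into a piece linear in $\theta_m$ with a known, filtered coefficient together with the genuinely constant scalar $\theta_m^\top Q_1\theta_m$. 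This is precisely why the augmented vector $\theta$ in \eqref{the} carries the extra entry $\tfrac{2}{\nu}\theta_m^\top Q_1\theta_m$ and $\Phi_\theta$ in \eqref{yphiphi} carries a trailing $1$. The residual filtered products of $\lambda$ with $y_a,y_b$ left behind by the $-\nu\calf[z]$ half of the identity must then be removed by re‑using the constraint \eqref{wim} and a second pass of the filter; this is the role of the chained states $\xi_6,\xi_7,\xi_8,\xi_9$, whose high‑pass combinations $\calf(1-\calf)$ and $(1-\calf)^2$ annihilate the constant biases and convert those residues into the constant‑parameter contributions $\theta_{y_a},\theta_{y_b}$ of Corollary \ref{cor1} plus $\et$.

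Finally I would collect the coefficients multiplying $\lambda$ into $\Phi_\lambda$, the coefficients multiplying the stacked constants $(\theta_m,\theta_{y_a},\theta_{y_b},\theta_m^\top Q_1\theta_m)$ into $\Phi_\theta$, and the purely measurable remainder onto the left as $y=\xi_5-\nu y_c-\xi_9$, so that comparison with \eqref{fil}--\eqref{yphiphi} reproduces \eqref{linear0} with all transients absorbed into $\et$. The delicate point is exactly the verification in the previous paragraph: one must check that every occurrence of the unknown flux inside a filtered product is either linear in $\lambda$ (a genuine regressor) or is converted into a constant parameter or an $\et$ term, and in particular that the stray constant cross‑products of the bias parameters (such as $\theta_m^\top\theta_{y_a}$) cancel, so that the surviving parameter vector is precisely \eqref{the} and no bilinear $\lambda$--$\theta_m$ coupling remains. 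This accounting is what makes the calculation ``technically involved,'' consistent with the proof being deferred to the appendix.
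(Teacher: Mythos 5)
Your overall architecture matches the paper's: differentiate the quadratic constraint \eqref{wim}, substitute $\dot\lambda=y_m+\theta_m$, trade derivatives for filtered signals through the stable filter $\nu/(p+\nu)$ (together with a swapping-type identity to pull $\lambda$ outside the filter so that it appears unfiltered in $\Phi_\lambda$), and make a second filtering pass to produce $\xi_6,\dots,\xi_9$ and the extra constant parameter. The first equation of \eqref{linear0} is handled exactly as in the paper.

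However, your mechanism for removing the bilinear term $2\lambda^\top Q_1\theta_m$ does not work as described, and this is the crux of the whole construction. The identity $\nu\lambda=\nu\,\tfrac{\nu}{p+\nu}[\lambda]+\xi_1+\theta_m+\epsilon_t$ is correct, but substituting it into $2\lambda^\top Q_1\theta_m$ leaves the term $2\bigl(\tfrac{\nu}{p+\nu}[\lambda]\bigr)^{\!\top} Q_1\theta_m$, which is still a product of two unknowns (a filtered copy of the flux times $\theta_m$); it is not ``a piece linear in $\theta_m$ with a known, filtered coefficient,'' and iterating the substitution only produces an infinite regress of filtered copies of $\lambda$ (the operator $\tfrac{\nu}{p+\nu}$ has unit DC gain, so the regress does not vanish). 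The paper's device is different: it applies the strictly high-pass operator $\tfrac{p}{p+\nu}$ to the \emph{entire} intermediate identity \eqref{dif6}. Because $\theta_m$ is constant, $\tfrac{p}{p+\nu}\bigl[\lambda^\top 2Q_1\theta_m\bigr]=\tfrac{1}{p+\nu}\bigl[\dot\lambda^\top 2Q_1\theta_m\bigr]=\tfrac{1}{p+\nu}\bigl[(y_m+\theta_m)^\top 2Q_1\theta_m\bigr]$, which is genuinely linear in the unknowns with the known regressor $2\xi_6$ plus the constant $\tfrac{2}{\nu}\theta_m^\top Q_1\theta_m$; the same pass annihilates the stray constant $\theta_m^\top\theta_{y_a}$ (it does not ``cancel'' on its own) and, via a second application of the swapping lemma to $\tfrac{p}{p+\nu}\lambda^\top[\xi_2+\nu y_a]$, yields the coefficient $2\xi_2+y_a-\nu\xi_4$ of $\lambda$. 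Without this second, high-pass pass on the whole equation, the regression model \eqref{linear0} is not reached, so your proposal as written has a genuine gap at its central step even though the surrounding scaffolding is right.
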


\begrem
\lab{rem5}
The motivation for the inclusion of the filtered signals $\xi_j,j=1,..,9,$ is given in the proof of Lemma \ref{lem1}. We underscore the presence of nonlinear operations in the generation of $\xi_5$ and $\xi_9$. 
\endrem
%
\section{Robust Flux Observers}
\lab{sec4}
%
In this section we present the main results of the paper, namely, two flux observers that are robust {\em vis-\`a-vis} measurement noise, represented by the constant bias \eqref{mea}. The observers follow as a direct application of gradient design techniques to the linear regression model \eqref{linear0} and, as expected, their convergence properties rely on a PE assumption imposed to the regressor vector \cite{LJU,SASBOD}.     
\subsection{Non-adaptive flux observer}
\lab{subsec41}
%
In this subsection we give an observer where the constant parameters $\theta$ and $\theta_m$ are treated as perturbations and prove that the observation error enters, exponentially fast, a residual set whose size is of the order of the measurement noise.

\begin{proposition}\em
\lab{pro1}
Consider the electromechanical system \eqref{sys}  verifying Assumption \ref{ass1} and the measured signals \eqref{mea}.  Assume the vector $\Phi_\lambda$, defined in \eqref{yphiphi} is PE. That is, there exists positive numbers $T_r$ and $\alpha_r$ such that
\begequ
\lab{pe1}
	\int_t^{t+T_r} \Phi_\lambda(s)\Phi_\lambda^\top (s) ds \geq \alpha_r {\bf I}_{n_E}, \; \forall t \geq 0.
\endequ
The flux observer  \eqref{yabc}-\eqref{yphiphi} and
\begin{align}
\lab{lam_hat}
\dot{\hat \lambda} = -R i_m + B u_m + \Gamma_r \Phi_\lambda [y-\Phi_\lambda^\top \hat \lambda],
\end{align}
where $\Gamma_r \in \rea^{n_E \times n_E}$ is a positive definite, tuning gain matrix ensures that there exists positive constants $m_r$, $\rho_r$ and $\ell$ such that
\beal{
\lab{lam_til}
|\tilde \lambda(t)| \leq m_r e^{-\rho_r t}|\tilde \lambda(0)| +  \ell, \; \forall t \geq 0,
}
where
\begequ
\lab{bouell}
\ell \leq \kappa |\col\left(\delta_i,\delta_u,|\delta_i|^2,|\delta_u|^2 \right)|,
\endequ
for some positive constant $\kappa$ independent of the measurement bias terms $\delta_i,\delta_u$. 
\qed
\end{proposition}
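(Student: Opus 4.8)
The plan is to derive the error dynamics and recognize them as a PE-driven gradient system perturbed by a bias-dependent forcing term. Setting $\tilde\lambda=\hat\lambda-\lambda$ and subtracting the flux dynamics of Lemma \ref{lem1} from the observer \eqref{lam_hat}, the terms $-Ri_m+Bu_m$ cancel; substituting the output equation $y=\Phi_\lambda^\top\lambda+\Phi_\theta^\top\theta+\et$ and using $y-\Phi_\lambda^\top\hat\lambda=-\Phi_\lambda^\top\tilde\lambda+\Phi_\theta^\top\theta+\et$ yields
\begin{align*}
\dot{\tilde\lambda}=-\Gamma_r\Phi_\lambda\Phi_\lambda^\top\tilde\lambda+\underbrace{\Gamma_r\Phi_\lambda[\Phi_\theta^\top\theta+\et]-\theta_m}_{=:d(t)}.
\end{align*}
This is a linear time-varying system whose unforced part is exactly the normalized gradient estimator associated with the regressor $\Phi_\lambda$.

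I would then invoke the classical result that, under the PE condition \eqref{pe1}, the homogeneous system $\dot z=-\Gamma_r\Phi_\lambda\Phi_\lambda^\top z$ is uniformly exponentially stable, so its transition matrix $\Psi(t,s)$ satisfies $\|\Psi(t,s)\|\le m_r e^{-\rho_r(t-s)}$ for some $m_r,\rho_r>0$ depending only on $\Gamma_r$, $T_r$, $\alpha_r$ and a uniform bound on $|\Phi_\lambda|$. The variation-of-constants formula
\begin{align*}
\tilde\lambda(t)=\Psi(t,0)\tilde\lambda(0)+\int_0^t\Psi(t,s)\,d(s)\,ds
\end{align*}
then produces the transient term $m_r e^{-\rho_r t}|\tilde\lambda(0)|$ directly, leaving the residual $\ell$ to be extracted from the convolution of the decaying kernel against $d$.

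The crux is bounding $d(t)$ by the bias with the correct polynomial structure. First I would establish that $\Phi_\lambda$ and $\Phi_\theta$ are bounded: since the state trajectory is bounded by hypothesis and the bias is constant, the measurements $i_m,u_m$---hence the filter inputs $y_m,y_a,y_b,y_c$---are bounded, and the filters \eqref{fil} are exponentially stable, so every $\xi_j$, and therefore $\Phi_\lambda,\Phi_\theta$, are bounded. Writing $b_1:=\|\Phi_\lambda\|_\infty$ and $b_2:=\|\Phi_\theta\|_\infty$, we get $|d(t)|\le\|\Gamma_r\|b_1(b_2|\theta|+|\et(t)|)+|\theta_m|$. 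Inspecting \eqref{the} and \eqref{theyab}, $\theta_m=R\delta_i-B\delta_u$ is linear in $(\delta_i,\delta_u)$, while $\theta$ collects terms linear in $\delta_i$ (namely $Q_2\delta_i$, $2Q_3\delta_i$) together with the quadratic terms $\delta_i^\top Q_3\delta_i+C^\top\delta_i$ and $\frac{2}{\nu}\theta_m^\top Q_1\theta_m$; bounding cross products $|\delta_i||\delta_u|$ by $\hal(|\delta_i|^2+|\delta_u|^2)$ gives $|\theta|+|\theta_m|\le\kappa_0\,|\col(\delta_i,\delta_u,|\delta_i|^2,|\delta_u|^2)|$ for a constant $\kappa_0$ independent of the bias. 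Splitting $d$ into its time-invariant part (of magnitude $\le c\,\kappa_0|\col(\delta_i,\delta_u,|\delta_i|^2,|\delta_u|^2)|$) and the exponentially decaying $\et$-part, the convolution of the constant part with the kernel is bounded by $(m_r/\rho_r)$ times it---yielding \eqref{bouell}---while the convolution of the $\et$-part is itself exponentially decaying and can be absorbed into the transient by slightly enlarging $m_r$ (or reducing $\rho_r$).

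The main obstacle is not the stability step---PE implying exponential stability is standard---but the bookkeeping in the last paragraph: one must check that every non-decaying contribution to $d$ is at most quadratic in the bias and lands exactly in the $\col(\delta_i,\delta_u,|\delta_i|^2,|\delta_u|^2)$ pattern of \eqref{bouell}, and one must derive regressor boundedness rigorously from the assumed boundedness of the state trajectory rather than assuming it. A secondary subtlety is that $m_r$ and $\rho_r$ themselves depend on $b_1$ through the PE estimate, so the constants should be fixed in the right order to avoid circular reasoning.
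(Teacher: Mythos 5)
Your proposal is correct and follows essentially the same route as the paper: the same perturbed-gradient error equation $\dot{\tilde\lambda}=-\Gamma_r\Phi_\lambda\Phi_\lambda^\top\tilde\lambda+d$, boundedness of $\Phi_\lambda,\Phi_\theta$ from boundedness of the measurements and the stable filters, exponential stability of the homogeneous part under PE (the paper cites Lemma 1 of \cite{EFIFRA} where you use a generic transition-matrix bound plus variation of constants), and the same bookkeeping of $\theta_m$ and $\theta$ in terms of $\delta_i,\delta_u$ to land in the pattern of \eqref{bouell}. Your explicit absorption of the exponentially decaying residual $\et$ into the transient term is, if anything, slightly more careful than the paper, which simply neglects it at the outset.
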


\begin{proof} From \eqref{linear0} and \eqref{lam_hat}, neglecting the exponentially decaying term $\et$, we get:
\beal{
\nonumber
\dot{\tilde \lambda} & = \theta_m + \Gamma_r \Phi_\lambda [ y - \Phi_\lambda^\top \hat \lambda ]\\
\nonumber	& = \theta_m -\Gamma_r \Phi_\lambda [\Phi_\lambda^\top \tilde \lambda + \Phi_\theta^\top \theta ]\\
	& = - \Gamma_r \Phi_\lambda\Phi_\lambda^\top  \tilde \lambda + \theta_m  - \Gamma_r \Phi_\lambda\Phi_\theta^\top  \theta
\nonumber \\
	& = - \Gamma_r \Phi_\lambda\Phi_\lambda^\top  \tilde \lambda +\theta_m- \Gamma_r \Phi_\lambda \Phi_\theta^\top  \theta .
\lab{eq:rfo_psi_tilde}
}

Note that, since $i_m$ and $u_m$ are bounded, we have that $\Phi_\lambda$, $\Phi_\theta$ are also bounded. Therefore, the disturbance signal
\begequ
\lab{b}
 b:= \theta_m- \Gamma_r \Phi_\lambda \Phi_\theta^\top  \theta
\endequ
is bounded. Since $\Phi_\lambda$ is PE we can invoke Lemma 1 in~\cite{EFIFRA}  to get the bound
\begin{align}
	\label{eq:rfo_norm_bound}
	\enorm{\tilde{\lambda}(t)} \leq & {e^{\eta T_r} \over \eta}\left[\sqrt{\lambda_{\tt max}\{\Gamma_r\} }\lpnorm{\Phi_\lambda} \times \right. \nonumber \\
	& \left. \times e^{-{\eta \over 2} e^{-2\eta T_r t}} \enorm{\tilde{\lambda}(0)} + \lpnorm{b}\right],
\end{align}
where
$$
\eta := -{1 \over 2 T_r} \ln \left( 1 - \frac{\lambda_{\tt min}\{\Gamma_r\} \alpha_r}{ 1 + \lambda_{\tt max}\{\Gamma_r\}^2 T_r^2 \lpnorm{\Phi_\lambda}^4 } \right).
$$

The first part of the proof is completed by picking
\begin{align}
\kappa_r & :={e^{\eta T_r} \over \eta}\sqrt{\lambda_{\tt max}\{\Gamma_r\} }\lpnorm{\Phi_\lambda},
\nonumber \\
\rho_r & := \hal \eta e^{-2\eta T_r},\qquad \ell:={e^{\eta T_r} \over \eta}\lpnorm{b}.
\nonumber
\end{align}

To establish the bound \eqref{bouell} we apply the triangle inequality to \eqref{b}
\beals{
	\lpnorm{b} & \leq |\theta_m|+\lambda_{\tt max}\{\Gamma_r\}\lpnorm{\Phi_\lambda}\lpnorm{\Phi_\theta}|\theta |
	}
and, use the definitions of $\theta_m$ and $\theta$ given in \eqref{theyab} and \eqref{the}. 
\end{proof} 

\begrem
\lab{rem6}
A corollary of Proposition \ref{pro1} is that the observation error exponentially enters a residual set whose size is determined by the measurement error and, in the absence of it, the observer is exponentially convergent. Unfortunately, due to the complex operations that gave rise to the linear regression model \eqref{linear0}, determining the conditions under which the PE assumption \eqref{pe1} is satisfied is a daunting task. 
\endrem
  
\subsection{Adaptive flux observer}
\lab{subsec42}
%
In this subsection we give an adaptive version of the observer, where the unknown parameters $\theta_m$ and $\theta$ are estimated on-line. In this case, always under a PE assumption, exponential convergence to zero of the flux observation error is ensured.

\begin{proposition}\em
\lab{pro2}
Consider the electromechanical system \eqref{sys} verifying Assumption \ref{ass1} and the measured signals \eqref{mea}. Let the adaptive flux observer be given by  \eqref{yabc}-\eqref{yphiphi}. 
\begin{align}
\lab{psi_dot_ad}
\qmx{ \dot{\hat \lambda} \\ \dot{\hat\theta} } = & \qmx{ -R i_m + B u_m +G \hat\theta \\ 0}+
\nonumber \\
&  \gamma_a\,\Psi\,\left(y-\Phi_\lambda^\top \hat \lambda-\Phi_\theta^\top \hat\theta\right).
\end{align}
where  we defined 
\beals{
\Psi & :=\col(\Phi_\lambda,\Phi_\theta) \in \rea^{(4n_E +2)}\\
G&:=\qmx{{\bf I}_{n_E} & | & {\bf 0}_{n_E \times (2n_E+2)})} \in \rea^{n_E  \times (3 n_E+2)} ,
}
and  $\gamma_a \in \rea_+$ is a scalar tuning gain.\footnote{The choice of scalar adaptation gain is done to simplify the presentation.}  Assume $\Psi$ is PE, that is, there exists positive numbers $T_a$ and $\nu_a$ such that 
\begequ
\lab{pe2}
	\int_t^{t+T_a} \Psi(s) \Psi^\top (s)  ds \geq \nu_a {\bf I}_{(4n_E+2)}, \; \forall t \geq 0.
\endequ

There exists a constant $\gamma_{\tt min}>0$ such that for all $\gamma_a > \gamma_{\tt min}$ the inequality
$$
\left|\lef[{c} \tilde \lambda(t) \\ \tilde \theta(t)  \rig]\right|\leq m_a e^{-\rho_a t}\left|\lef[{c} \tilde \lambda(0) \\ \tilde \theta(0) \rig]\right|, \; \forall t \geq 0,
$$
holds for some positive constants $m_a$ and $\rho_a$ with ${\tilde \theta}:={\hat \theta} - \theta$.
\qed
\end{proposition}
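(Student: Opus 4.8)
The plan is to reduce the claim to the uniform global exponential stability (UGES) of a homogeneous linear time-varying error system and then to exploit the persistency of excitation condition \eqref{pe2}. First I would introduce the stacked error $z:=\col(\tilde\lambda,\tilde\theta)\in\rea^{4n_E+2}$ and compute its dynamics from \eqref{linear0} and \eqref{psi_dot_ad}. Writing the output prediction error via the regression \eqref{linear0} as $y-\Phi_\lambda^\top\hat\lambda-\Phi_\theta^\top\hat\theta=-\Psi^\top z+\et$, and observing that $G\theta=\theta_m$, so that the drift $\theta_m$ in $\dot\lambda$ cancels against $G\hat\theta$ up to $G\tilde\theta$, one arrives at
\begequ
\dot z=(A-\gamma_a\Psi\Psi^\top)z+\gamma_a\Psi\et,\quad A:=\qmx{{\bf 0}_{n_E\times n_E} & G \\ {\bf 0}_{(3n_E+2)\times n_E} & {\bf 0}_{(3n_E+2)\times(3n_E+2)}}.
\endequ
Here $A$ is a constant, nilpotent matrix with $\|A\|=\|G\|=1$ that encodes the coupling of the parameter error into the undamped flux-error channel $\dot{\tilde\lambda}$. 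Since $i_m,u_m$ are bounded, $\Psi$ is bounded, and exactly as in the proof of Proposition \ref{pro1} the term $\gamma_a\Psi\et$ enters as an exponentially vanishing input that does not affect the exponential bound; it therefore suffices to prove UGES of the homogeneous part $\dot z=(A-\gamma_a\Psi\Psi^\top)z$.

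For the homogeneous system my plan is a perturbation argument built on the PE condition. The nominal gradient system $\dot z=-\gamma_a\Psi\Psi^\top z$ is UGES whenever $\Psi$ is PE --- this is the classical gradient/Gramian estimate (see \cite{SASBOD} and Lemma 1 of \cite{EFIFRA} used in Proposition \ref{pro1}) --- and its transition matrix admits a bound $m_0e^{-\rho_0(t-s)}$ whose contraction per window is governed by $\gamma_a\nu_a$. The coupling $Az$ is then treated as a perturbation. Concretely I would run a window-integral argument on $V=\hal|z|^2$: over $[t,t+T_a]$ the indefinite cross term $z^\top A z$ contributes at most $\|A\|\int|z|^2$, whereas the PE lower bound \eqref{pe2} forces the dissipation $\gamma_a\int|\Psi^\top z|^2$ to dominate it once $\gamma_a\nu_a$ exceeds a fixed multiple of $\|A\|T_a$. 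This yields a strict geometric decrease $V(t+T_a)\le\rho V(t)$ with $\rho<1$ for every $\gamma_a>\gamma_{\tt min}$, and the constants $m_a,\rho_a$ then follow routinely from the per-window contraction factor together with $\sup_t\|\Psi(t)\|$.

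The main obstacle is precisely this coupling term. Because the flux channel $\dot{\tilde\lambda}=G\tilde\theta-\gamma_a\Phi_\lambda\Psi^\top z$ carries no intrinsic damping, the naive bound $\dot V\le\|A\||z|^2-\gamma_a|\Psi^\top z|^2$ is sign-indefinite at the instants where $\Psi^\top z$ is small (the unexcited directions), so pointwise negativity is unavailable and one is forced into a Gramian/averaging estimate over a full excitation window. The delicate point is that increasing $\gamma_a$ simultaneously strengthens the window-integrated contraction and accelerates the intra-window variation of $z$, which degrades the passage from $\int|\Psi^\top z|^2$ to $\nu_a|z|^2$; controlling this trade-off is what produces a genuine lower threshold $\gamma_{\tt min}$ rather than ``any positive gain'', and carrying the estimate through carefully is the technical heart of the argument. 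An equivalent and perhaps cleaner route would be to build an explicit strict Lyapunov function for the PE system (of Matrosov/Lor\'ia type) and add a small cross term $\varepsilon\,\tilde\lambda^\top G\tilde\theta$ to absorb $A$; I expect either route to deliver the stated exponential bound with $\rho_a$ and $m_a$ expressed through $\gamma_a,\nu_a,T_a$ and $\sup_t\|\Psi(t)\|$.
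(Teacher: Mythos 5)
Your reduction to the error system is exactly the paper's: with $z=\col(\tilde\lambda,\tilde\theta)$ and $G\theta=\theta_m$ one gets $\dot z=(G_a-\gamma_a\Psi\Psi^\top)z+\gamma_a\Psi\,\et$, and discarding the exponentially decaying input is legitimate. The gap is in how you propose to beat the coupling $G_a z$. Your window-integral argument rests on the claim that the guaranteed contraction of $|z|^2$ over one excitation window scales like $\gamma_a\nu_a$, so that it eventually dominates the fixed growth $\|G_a\|T_a$. That is not what the Gramian/averaging estimate delivers: because $z$ itself moves at speed $O(\gamma_a\|\Psi\|_\infty^2|z|)$ inside the window (the very trade-off you flag), the passage from $\gamma_a\int_t^{t+T_a}|\Psi^\top z|^2ds$ to a multiple of $|z(t)|^2$ yields a per-window contraction of order $\gamma_a\nu_a/(1+\gamma_a^2T_a^2\|\Psi\|_\infty^4)$ --- compare the exponent $\eta$ in the proof of Proposition \ref{pro1}. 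This quantity is maximized at a finite gain and tends to zero as $\gamma_a\to\infty$, so there is no threshold beyond which it exceeds the fixed perturbation contribution $\|G_a\|T_a$; for large $\gamma_a$ the estimate gets \emph{worse}, not better. This is precisely the obstruction recorded in Remark \ref{rem7}: with the existing convergence-rate estimates for $\dot\chi=-\Psi\Psi^\top\chi$, $\Psi\in PE$, one cannot absorb a fixed bounded additive perturbation by increasing the gain. You correctly identify this as ``the technical heart of the argument,'' but the step that closes it is exactly what is missing, and it cannot be closed by the estimate you invoke.

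The paper's device is different: it rescales time via $dt/d\tau=1/\gamma_a$, which puts the factor $1/\gamma_a$ on the \emph{perturbation} rather than leaving $\gamma_a$ on the dissipation, giving $d\chi/d\tau=-\Psi\Psi^\top\chi+(1/\gamma_a)G_a\chi$ as in \eqref{eq:afo_sys_scaled}. The unperturbed unit-gain system is GES under PE; Krasovskii's converse theorem then supplies a strict Lyapunov function with constants $c_1,\dots,c_4$, and a total-stability computation gives $\dot V\le-\frac{1}{c_2}(c_3-\frac{c_4}{\gamma_a}\|G_a\|)V$, whence $\gamma_{\tt min}=c_4\|G_a\|/c_3$. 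The price, acknowledged in Remark \ref{rem7}, is that $\gamma_{\tt min}$ is not explicit in $(T_a,\nu_a,\|\Psi\|_\infty)$. If you wish to keep a direct route, you would need a \emph{strict} Lyapunov function for the PE gradient system with a decay rate that provably survives the perturbation after this rescaling (essentially the Lor\'ia--Panteley/Matrosov construction you mention, applied in the $\tau$ scale); the cross-term trick carried out in the original time scale runs into the same large-gain degradation.
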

\begin{proof} To simplify the notation define the vector $\chi:=\col( \tilde \lambda, \tilde \theta) \in \rea^{(4 n_E+2)}$. Using \eqref{linear0} and noticing that $G \theta=\theta_m$ the error model takes the form
\begin{align}
\label{eq:afo_sys}
\dot {\chi}  = -\gamma_a \Psi\Psi^\top   \chi + G_a\chi,
\end{align}
where $G_a \in \rea^{(4n_E +2) \times (4 n_E+2)}$:
$$
G_a:=\qmx{  {\bf 0}_{n_E \times n_E} &  G \\  {\bf 0}_{(3n_E +2) \times n_E} &  {\bf 0}_{(3n_E + 2) \times (3n_E+2)}}.
$$

Introduce the time-scaling 
$$
{dt \over d\tau} = {1 \over {\gamma_a}},
$$
and rewrite system~\eqref{eq:afo_sys}, in the $\tau$ time scale as 
\begin{align}
\label{eq:afo_sys_scaled}
{d\chi \over d\tau} = - \Psi(\tau) \Psi^\top (\tau)  \chi(\tau) +  {1 \over {\gamma_a}} G_a \chi(\tau).
\end{align}
Now, the PE property is invariant under time-scaling, hence $\Psi(\tau)$ is also PE. Consequently, the zero equilibrium of the unperturbed system 
$$
{d\chi \over d\tau} = - \Psi(\tau) \Psi^\top (\tau)  \chi(\tau)
$$
is GES \cite{AND,SASBOD}. Invoking Krasovskii's theorem~\cite{KRA} for GES systems there exists a continuously differentiable function $V(\chi,\tau)$ such that 
\begin{align}
	\nonumber
	& c_{1}\left\vert \chi\right\vert ^{2}\leq V(\tau,\chi)\leq c_{2}\left\vert \chi\right\vert ^{2}
	\\
	\nonumber
	& \frac{\partial V(\tau)}{\partial \tau}-\frac{\partial V}{\partial \chi}  \Psi(\tau) \Psi^\top (\tau)  \chi(\tau)\leq-c_{3}\left\vert \chi\right\vert ^{2}
	\\
	\label{dotv}
	& \left\vert \frac{\partial V}{\partial \chi}\right\vert \leq c_{4}\left\vert \chi\right\vert,
\end{align}
where $c_i,\;i=1,\dots,4$, are positive numbers. Using \eqref{dotv} we see that the time derivative of $V(\chi,\tau)$ along the solutions of \eqref{eq:afo_sys_scaled} satisfies the inequality
\begin{align}
	\dot V & \leq -\frac{1}{c_2} \left( c_3 - { c_4 \over \gamma_a}\|G_a\| \right) V.
	\nonumber
\end{align}
The proof is concluded setting
$$
	\gamma_{\tt min} :=\frac{c_4}{c_3} \|G_a\|.
$$
\end{proof}

\begrem
\lab{rem7}
Unfortunately, we cannot give an explicit expression for $\gamma_{\tt min}$ in terms of the PE parameters $T_a,\nu_a$ and $\|\Psi\|_\infty$. Indeed, with the existing estimates of convergence rate of systems of the form $\dot {\chi}=\Psi\Psi^\top \chi$, with $\Psi \in PE$,   it is not possible to accommodate a bounded additive disturbance of the form \eqref{eq:afo_sys_scaled}. However, for given numerical values of these constants, it is alway possible to solve the scalar inequality and numerically compute  $\gamma_{\tt min}$. 
\endrem
%
\section{Two Physical Examples}
\lab{sec5}
%
In this section we identify the parameters of the algebraic constraint \eqref{w} for two practically important physical examples:
\begite
\item[(i)] The PMSM extensively studied in the literature, see {\em e.g.}, \cite{BERPRA,BOBetalaut,VERetal} and references therein.
\item[(ii)] The benchmark MagLev system proposed in \cite{KNO} and considered in \cite{RODetal}.
\endite  
\subsection{Surface-mounted PMSM}
\lab{subsec51}
%
We consider the classical, two--phase $\alpha\beta$ model of the unsaturated, {non--salient} PMSM model given, for instance, in \cite{BOBetalaut}. In that case, the electrical energy function is given by \eqref{parhe} with  parameters 
\begin{align}
n_E & = 2,\;n_M=1,\;L(q)=L_s {\bf I}_2,\;
\nonumber \\
\mu(q) & = \lambda_m \left[ \begin{array}{c} \cos(n_p q) \\ \sin(n_p q) \end{array}\right],
\nonumber
\end{align}
where $L_s,\lambda_m,n_p$  are known positive constants. As it was first observed in \cite{ORTetalcst}, and later used in \cite{BERPRA,BOBetalaut,MALetal}, the following quadratic algebraic equation holds
$$
0=|\lambda-L_si|^2-\lambda_m^2.
$$
Developing this equation we get
\begequ
\lab{wpmsm}
0=|\lambda|^2-2 L_s\lambda^\top \,i+L_s^2|i|^2-\lambda_m^2.
\endequ
Equating the terms of \eqref{wpmsm} with those of \eqref{w} we identify the required constants as
\begin{align*}
Q_1 & = {\bf I}_2, & Q_2 & =-2L_s {\bf I}_2, & Q_3= L_s^2 {\bf I}_2 ,\;
\\
C & = 0, & d & =-\lambda_m^2. &
\end{align*}

Given the definition of these parameters it is straightforward to apply Lemma \ref{lem1} to construct the linear regression \eqref{linear0} and, using Propositions \ref{pro1} or \ref{pro2}, to design the robust flux observers---these additional steps are omitted for brevity.

\subsection{Two degrees of freedom MagLev system}
\lab{subsec52}
%

\begin{figure}
\label{fig1}
\centering
\includegraphics[width=\linewidth]{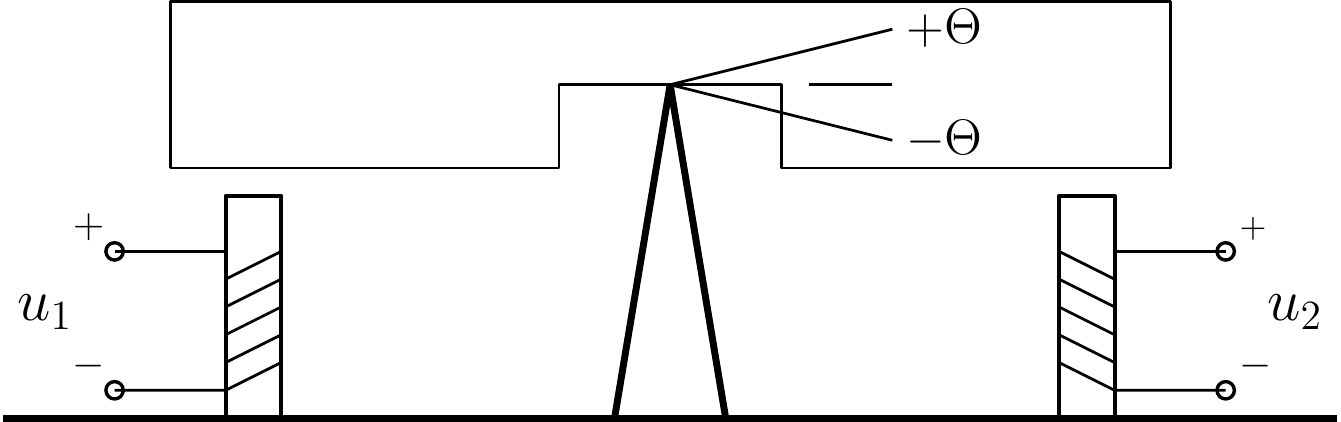}
\caption{$2$-DOF MagLev system \cite{KNO}}
\end{figure}

We consider in this Subsection the MagLev system depicted in Fig. 1, which was proposed as a benchmark example in \cite{KNO} and studied in \cite{RODetal}. To illustrate the performance of the proposed observer we give also simulation results of the system in closed-loop with the full-state feedback controller proposed in \cite{RODetal}. In order to make the presentation self-contained, we briefly present the systems model. The interested reader is refered to \cite{KNO,RODetal} for further details on the model and the control strategy.  

\subsubsection{Mathematical model}
\lab{subsubsec521}

The system has two electrical and one mechanical coordinate, therefore $n_E=2$ and $n_M=1$. The total energy function of the system is given by
\begin{align}
	H(\lambda,q,p)= \hal \lambda^\top  L^{-1}(q) \lambda + \frac{1}{2J}{p^2},
\lab{hmaglev}
\end{align}
with 
\begin{align}
	L(q) & =\qmx{{k_1\over g-q} & 0 \\ 0 & {k_2\over g+q}},
	\lab{parmaglev}
\end{align}
where $q=D\Theta$, $D$ is distance from pivot to actuator, which is denoted by $L$ in \cite{KNO} and \cite{RODetal}, $\Theta$ is rotation angle, and $k_1$, $k_2$, $g$ are known positive constants. The pH representation \eqref{sys} takes the form
\begin{align}
\lab{sysmaglev}
\qmx{\dot \lambda \\ \dot q \\ \dot p} & =\qmx{-{R \over Nc_1} & 0 & 0 & 0 \\ 0 & -{R \over Nc_2} & 0 & 0 \\ 0 & 0 & 0 & D \\ 0 & 0 & -D & 0 } \nabla H + \qmx{ u \\ {\bf 0}_{2 \times 1}}
\end{align}
where $u \in \rea^2$ are the control voltages applied to the inductors and $R$, $N$, $c_1$, and $c_2$ are known positive constants. Notice that 
\begequ
\lab{lamequli}
\lambda=L(q)i.
\endequ

We simulate the system in closed-loop with the nonlinear, static full-state feedback controller proposed in~\cite{RODetal} given as
\begin{align}
\nonumber
	u_1 & = R \frac{g - q}{k_1} \lambda_1 - \frac{R D}{\alpha c_1} \left(c_1\frac{\lambda_1^2}{2k_1} - c_2 \frac{\lambda_{2*}^2}{2k_2}\right) \\
\nonumber		
& \quad - \left(\frac{R D}{2\alpha c_1} + N\alpha\frac{R_a}{D}\right)G\tilde{z}_2 - N\frac{\alpha}{J}p,
	\\
\nonumber
u_2 & = R \frac{g + q}{k_2} \lambda_2 + \frac{R D}{\beta c_2} \left( c_2\frac{\lambda_2^2}{2k_2} - c_2 \frac{\lambda_{2*}^2}{2k_2} \right) \\
		& \quad - \left(\frac{R D}{2\beta c_2} + N\beta\frac{R_a}{D}\right)G \tilde{z}_2 - N\frac{\beta}{J}p,
\lab{idapbc}
\end{align}
where $\alpha > 0$, $\beta < 0$, $R_{a} > 0$, $G > 0$ are controller tuning parameters,
\begin{align*}
	\tilde{z}_2 & = \frac{D}{2 \alpha}(\lambda_1-\lambda_{1*}) + \frac{D}{2 \beta}(\lambda_2-\lambda_{2*})
	\\
	& \quad + D (q-q_{*}) + \frac{R_a}{D}(p-p_{*}),
\end{align*}
and $\lambda_{1*} = \sqrt{\frac{k_1 c_2}{k_2 c_1}} \lambda_{2*}$, $\lambda_{2*}$, $q_{*}$, $p_{*} = 0$ are the desired constant values for $\lambda_{1}$, $\lambda_{2}$, $q$, $p$ respectively.

\subsubsection{Regression form \eqref{w}}
\lab{subsubsec522}

From \eqref{lamequli} with \eqref{parmaglev} we easily get
\begequ
\lab{wmglv}
0=2g\lambda_1\lambda_2-k_1i_1\lambda_2-k_2i_2\lambda_1.
\endequ
Equating the terms of \eqref{wmglv} with those of \eqref{w} we identify 
\begin{align*}
	Q_1 & = \qmx{0 & g \\ g & 0}, & Q_2 & = \qmx{0 & -k_2 \\ -k_1 & 0}, & Q_3 &= {\bf 0}_{2 \times 2},
	\\
	C&=0, & d&=0, &\theta_{y_b}&=0.
\end{align*}

In this case, $y_b$ and $y_c$ are not used and $\xi_3$ and $\xi_8$ can be omitted. Then, the parameters of the regression form \eqref{linear0} become
\beal{
\nonumber
y&=\xi_5-\xi_9\\
\nonumber
\Phi_\lambda&= 2\xi_2+y_a-\nu\xi_4\\
\nonumber
\Phi_\theta&=\qmx{ 2\xi_6\\
\xi_1-\xi_7\\
0\\
1
},
}
and the following filters are used
\beal{
\nonumber
\dot\xi_1&=-\nu\xi_1+\nu y_m\\
\nonumber\dot\xi_2&=-\nu\xi_2+2\nu Q_1y_m-\nu^2 y_a\\
\nonumber\dot\xi_4&=-\nu\xi_4+\xi_2+ 2Q_1 y_m\\
\nonumber
\dot\xi_5&=-\nu\xi_5+y_m^\top \xi_2\\
\nonumber\dot\xi_6&=-\nu\xi_6+\nu\xi_4-\xi_2\\
\nonumber\dot\xi_7&=-\nu\xi_7+\nu \xi_1\\
\nonumber\dot\xi_{9}&=-\nu\xi_{9}+\nu \xi_5+y_m^\top [\nu\xi_4-\xi_2].
}
\subsubsection{Simulation results}
\lab{sec6}
%

The $2$-dof Maglev system \eqref{hmaglev}-\eqref{sysmaglev} in closed-loop with the controller \eqref{idapbc}
was simulated with the following plant parameters: 
\begin{gather*}
J = 9.67\times10^{-2},\;k_1 = k_2 = 2.2\times10^{-8},
\\
R = 1.6,\;N=321,\;c_1 = c_2 = 293.5,
\\
g = 3.3\times10^{-4},\;D=0.145.
\end{gather*}
In order to comply with the PE requirement we use  time-varying position reference, denoted $q_d(t)$:
$$
q_{d}(t) = 10^{-5} \left[ 6 \sin 10t + 4 \sin 20 t + 6 \sin 15t \right].
$$
For simplicity we set $\lambda_{1d}=\lambda_{2d}=p_d=0$.

The controller parameters were fixed at 
$$
\alpha=10,\;\beta=-10,\;R_a=10^{-2},
$$
which were tuned to reduce the overshoot. The filter parameter used in the observers was set at $\nu=50$ and the adaptation gains were selected as $\Gamma=10^4$ and $\gamma_a=\diag\{10^{20};10^{20};10^{20};10^{20};2\times10^{13};2\times10^{13};20\}$. 

In the simulations we set the measurement biases as follows
\begin{align}
\nonumber
\delta_i(t)&=\left\{\begin{array}{l}
\begin{bmatrix}
-0.003 & 0.0025
\end{bmatrix}^\top\!\!\!, \text{ for } 0 \le t \le 50 \text{ sec,}\\
\begin{bmatrix}
0.001 & 0.0008
\end{bmatrix}^\top\!\!\!, \text{ for } t \ge 50 \text{ sec,}
\end{array}
\right. \\
\delta_u(t)& =\left\{\begin{array}{l}
\begin{bmatrix}
0 & 0
\end{bmatrix}^\top\!\!\!, \text{ for } 0 \le t \le 50 \text{ sec,}\\
\begin{bmatrix}
0.002 & 0.0002
\end{bmatrix}^\top\!\!\!, \text{ for } t \ge 50 \text{ sec.}
\end{array}
\right.
\nonumber
\end{align}

\begin{figure}[htp]
\centering
\includegraphics[width=\linewidth]{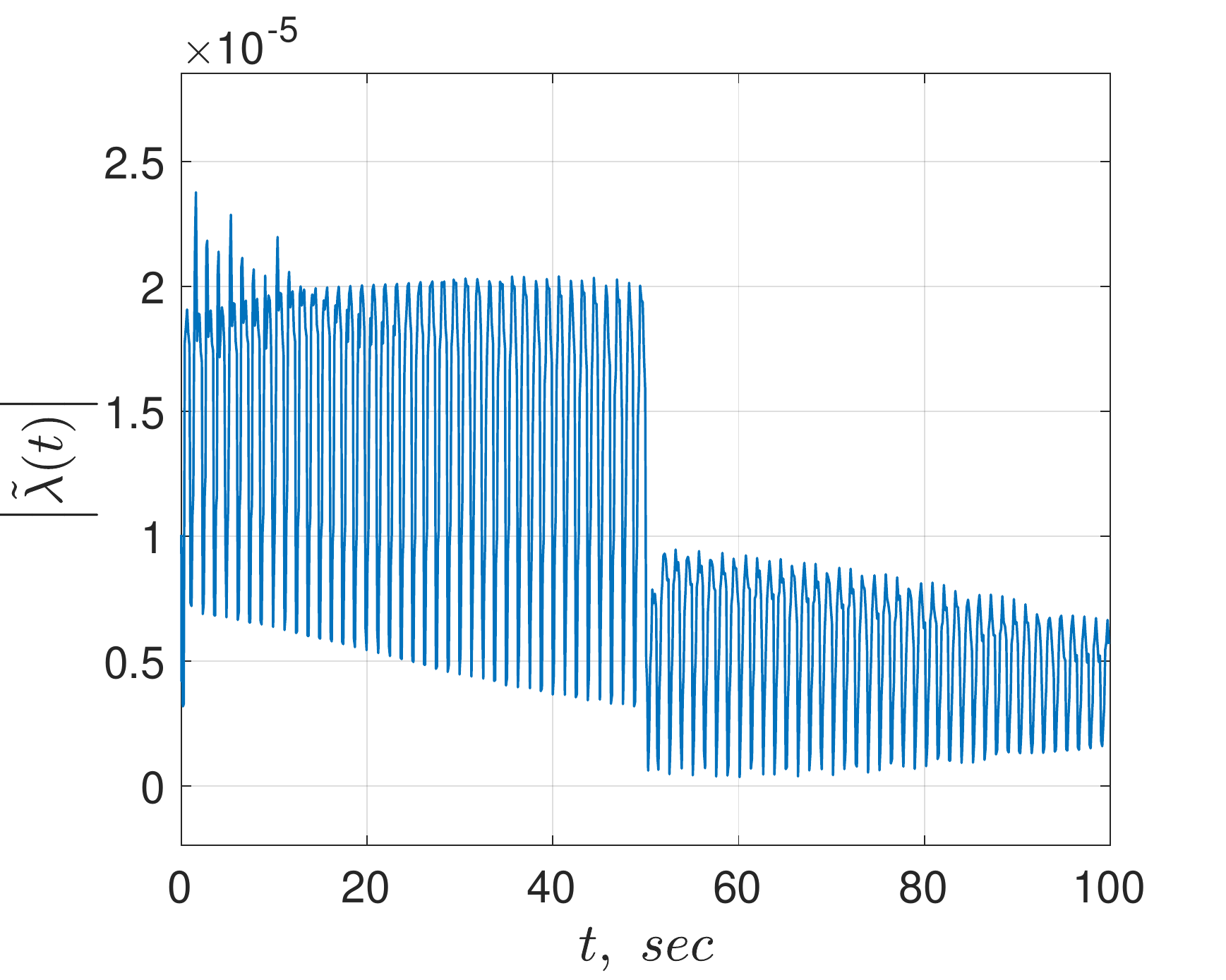}
\caption{\label{fig:robust} Norm of the robust flux observer error}
\end{figure}

\begin{figure}[htp]
\centering
\vspace{-3mm}
	\subcaptionbox{\label{fig:adaptive_flux} Norm of the flux estimation error}{\includegraphics[width=\linewidth]{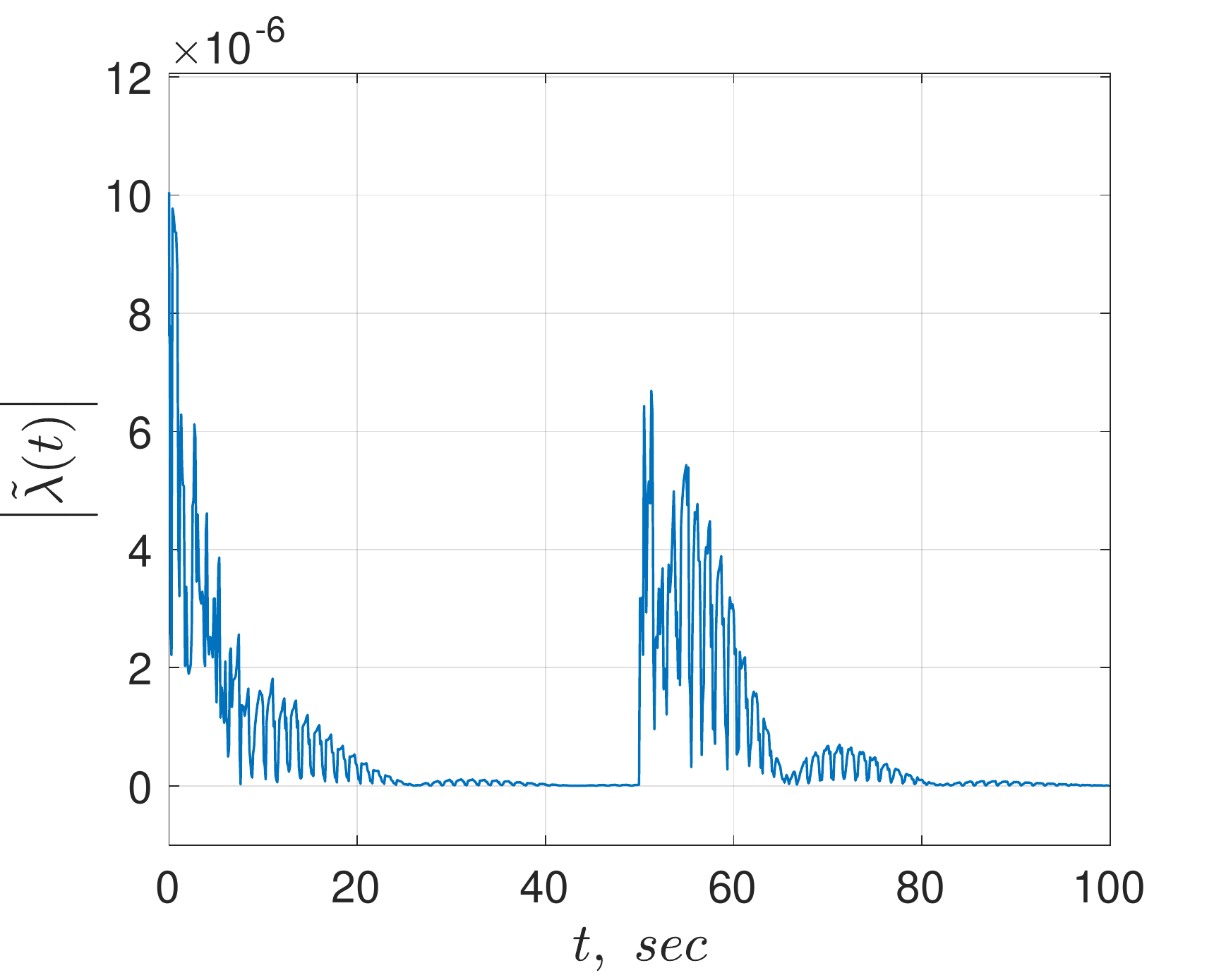}}
    \
	\subcaptionbox{\label{fig:adaptive_parameters} Norm of the parameters estimation error}{\includegraphics[width=\linewidth]{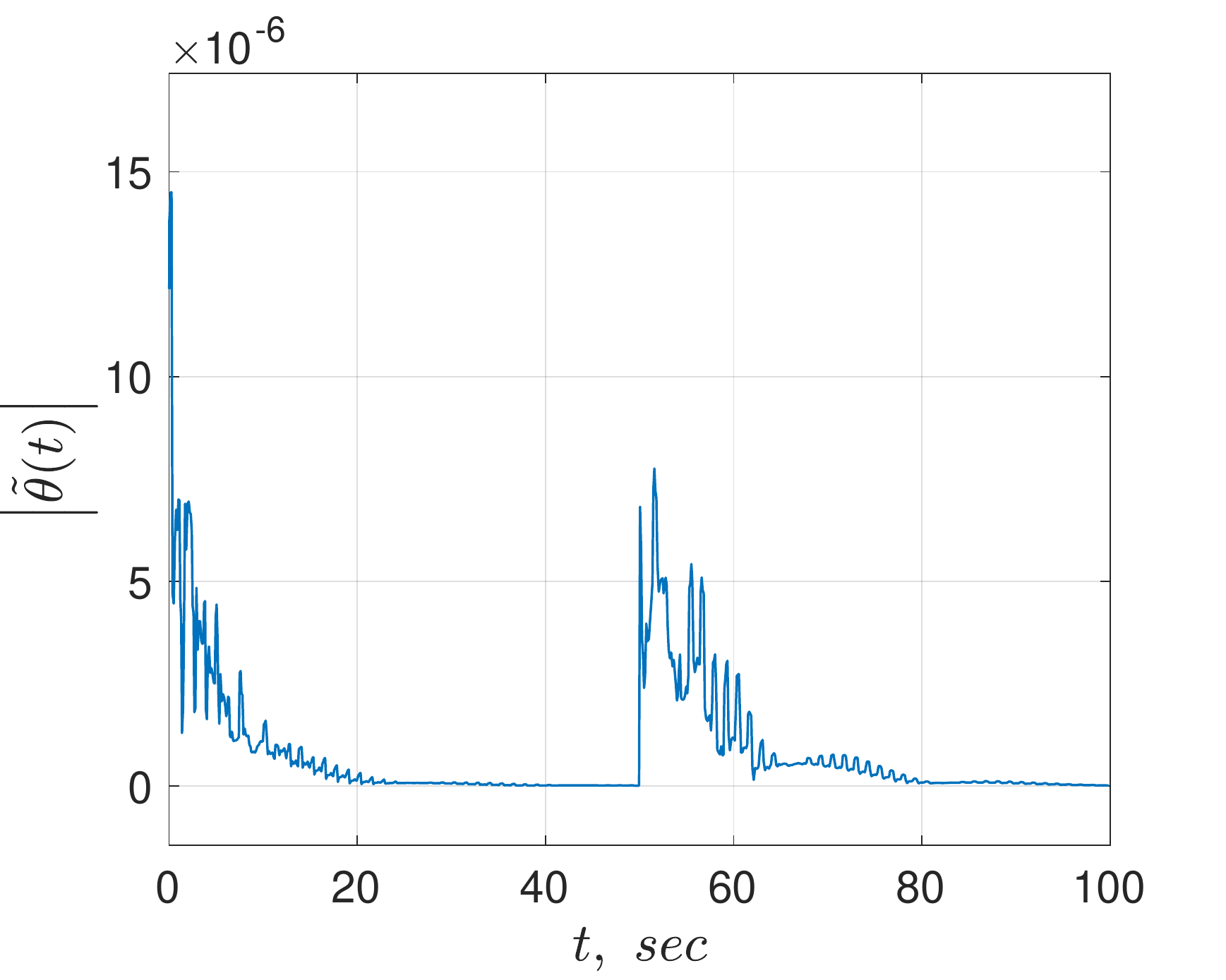}}
\vspace{-2mm}	
	\caption{\label{fig:adaptive} Adaptive flux observer}
\end{figure}

In Fig.~\ref{fig:robust} we illustrate the behaviour of the robust flux observer. The norm of the flux estimation error is bounded and, as indicated by the theory, the steady-state error depends on amplitude of the biases. Fig.~\ref{fig:adaptive} shows the results for the adaptive version of the flux observer. In this case, the norms of the flux and parameters estimation error converge to zero.

\section{Conclusions and Future Work}
\lab{sec7}
%
A solution to the robust flux observer problem posed in Subsection \ref{subsec22} for a class of electromechanical systems has been proposed. The class is identified by the, admittedly cryptic, Assumption \ref{ass1}. The main result of the paper is Lemma \ref{lem1} where it is shown that, via filtering and some nonlinear algebraic operations, it is possible to define the linear regression model \eqref{linear0} for this class of systems. Equipped with this regression model it is straightforward to apply a classical gradient design technique to estimate the flux and the unknown parameters.\footnote{Other alternatives, besides gradient-descent, are of course possible, for instance least squares or extended Kalman filters. But, in the present context, there are no clear advantages from the use of these more complicated schemes.}

Several open questions are currently being investigated.
\begite
\item A critical assumption in the paper is the PE condition that, as indicated in Remark \ref{rem6}, seems complicated to transfer to the external signals of \eqref{sys} or a particular operating regime. In the recent paper \cite{BARORT} relaxed conditions for global asymptotic (but not exponential) stability of the unperturbed error equations of interest in this paper have been reported. It is not clear at this point whether these weaker stability property can be used to prove asymptotic stability of the perturbed equations. 

\item A far reaching question is the identification of physical systems which verify the critical Assumption \ref{ass1}. This requires a detailed analysis of the physical modelling of magnetic field systems. A first, natural candidate to be investigated is the salient PMSM \cite{NAM}.

\item As shown in \cite{BOBetalaut} and \cite{BOBetalmaglev} it is possible to estimate $q$ and $\dot q$ from the knowledge of $\lambda$ for PMSM and Maglev systems, respectively. It is interesting to know under which conditions is it also possible for other classes of systems verifying Assumption \ref{ass1}. Obviously, this would imply an investigation on the mechanical energy function and the form of the toque of electrical origin $ \nabla_q H(\lambda,q) $.

\item Blondel-Parks transformability \cite{LIUetal,ORTbook} is a property of electrical machines that allows to remove the dependence on $q$ of the electrical energy function via a simple change of coordinates. This property is clearly similar to Assumption \ref{ass1} and their connection, though unclear at this point, worth investigated.  

\item We have considered in the paper only electromechanical systems with magnetic fields, but there are many practical examples where an electrical field is also present. How to tackle this class of systems is also a topic of current research.
\endite
%

%
\appendix
\section*{A. Proof of Lemma \ref{lem1}}
%

As suggested in \cite{BERPRA} to obtain a linear equation from the quadratic one \eqref{wim} we differentiate it to get 
\beal{
\lab{dif1}
0&=2\lambda^\top  Q_1\dot \lambda+\lambda^\top \dot y_a+\dot \lambda^\top  y_a+\dot \lambda^\top \theta_{y_a}+\dot y_b^\top \theta_{y_b}+\dot y_c\nonumber\\
&=\lambda^\top (2Q_1\dot \lambda+\dot y_a)+\dot \lambda^\top  y_a
+\dot \lambda^\top \theta_{y_a}+\dot y_b^\top \theta_{y_b}+\dot y_c
}
Substitution of \eqref{mea} into \eqref{dotlam} gives
\beal{
\lab{dotlam2}
\dot\lambda&=-R (i_m-\delta_i) + B (u_m-\delta_u),\nonumber\\
&=y_m+\theta_m,
}
where $y_m$ and $\theta_m$ are defined in \eqref{ym} and \eqref{the}, respectively. Clearly, \eqref{dotlam2} corresponds to the first equation of \eqref{linear0}.
Substitution of \eqref{dotlam2} into \eqref{dif1} yields
\begin{align}
\lab{dif2}
0 & = \lambda^\top (2Q_1y_m+2Q_1\theta_m+\dot y_a)+(y_m+\theta_m)^\top  y_a
	\nonumber \\
	& \quad + (y_m+\theta_m)^\top\theta_{y_a}+\dot y_b^\top  \theta_{y_b}+\dot y_c
\end{align}
Let us apply the operator ${\nu\over p+\nu}$ to \eqref{dif2}, where $p:={d\over dt}$.
\begin{align}
\lab{dif3}
0 & = {\nu\over p+\nu}\lambda^\top  (2Q_1y_m+2Q_1\theta_m+\dot y_a) + {\nu\over p+\nu}y_m^\top y_a
\nonumber \\
& \quad + \theta_m^\top  {\nu\over p+\nu}y_a + \theta_{y_a}^\top  {\nu\over p+\nu}y_m + \theta_{y_b}^\top  {p\nu\over p+\nu}y_b
\nonumber \\
& \quad + {\nu\over p+\nu}\theta_m^\top  \theta_{y_a}
+{p\nu\over p+\nu}y_c
+\et.
\end{align}

Now, let us recall the Swapping Lemma 3.6.5 of \cite{SASBOD}
$$
 {\nu\over p+\nu}(x^\top z)=z^\top( {\nu\over p+\nu}x)- {1\over p+\nu}[\dot z^\top( {\nu\over p+\nu}x)], 
 $$
which applied to the first right hand term in \eqref{dif3} yields
\begin{align}
&{\nu\over p+\nu}\lambda^\top  (2Q_1y_m+2Q_1\theta_m+\dot y_a)
\nonumber \\
& = \lambda^\top  {\nu\over p+\nu}\left[2Q_1y_m+2Q_1\theta_m+\dot y_a\right]
\nonumber \\
& \quad - {1\over p+\nu}\left[{(y_m+\theta_m)}^\top \times \right.
\nonumber \\
& \qquad \bigg. \times {\nu\over p+\nu} \left[2Q_1y_m+2Q_1\theta_m+\dot y_a\right]\bigg].
\nonumber
\end{align}
Combining this identity with the decomposition
$$
 {p\nu\over p+\nu}y_b=\nu y_b-{\nu^2\over p+\nu} y_b
$$
and grouping some terms yields
\begin{align}
0&=
\lambda^\top  \left[{\nu\over p+\nu}2Q_1y_m+{p\nu\over p+\nu} y_a\right]
\nonumber\\
& \quad + \lambda^\top \left[{\nu\over p+\nu}2Q_1\theta_m\right]
\nonumber\\
& \quad 
-{1\over p+\nu}\left[y_m^\top  {\nu\over p+\nu}[2Q_1y_m+\dot y_a]\right]
\nonumber\\
&\quad
-{1\over p+\nu}\left[\theta_m^\top  {\nu\over p+\nu}[2Q_1y_m+\dot y_a] \right.
\nonumber\\
&\left. \qquad + y_m^\top  {\nu\over p+\nu}2Q_1\theta_m
+\theta_m^\top  {\nu\over p+\nu}2Q_1\theta_m]
\right]
\nonumber\\
&\quad
+{\nu\over p+\nu}\left[y_m^\top  y_a-\nu y_c\right]
+\theta_m^\top  {\nu\over p+\nu}y_a
\nonumber \\
& \quad 
+\theta_{y_a}^\top {\nu\over p+\nu}y_m +\theta_{y_b}\left[\nu y_b-{\nu\over p+\nu}\nu y_b\right]
\nonumber\\
&\quad 
+{\nu\over p+\nu}\theta_m^\top \theta_{y_a} +\nu y_c+\et,
\nonumber
\end{align}
that may be represented in the form
\begin{align}
&{1\over p+\nu}\left[y_m^\top  \left[{\nu\over p+\nu}2Q_1y_m+{p\nu\over p+\nu} y_a\right]+\nu^2 y_c
\right. \nonumber \\
& \bigg. \qquad -\nu y_m^\top  y_a\bigg]-\nu y_c
\nonumber \\
& = \lambda^\top  \left[{\nu\over p+\nu}2Q_1y_m+{p\nu\over p+\nu} y_a \right] +\lambda^\top 2Q_1\theta_m
\nonumber \\
& \quad - {1\over p+\nu}\left[\theta_m^\top \left[{\nu\over p+\nu}2Q_1y_m
+ {p\nu\over p+\nu} y_a\right]
\right. \nonumber \\
& \bigg. \qquad + y_m^\top 2Q_1 \theta_m +\theta_m^\top 2Q_1\theta_m
\bigg] +\theta_m^\top  {\nu\over p+\nu}y_a
\nonumber\\
&\quad
+\theta_{y_a}^\top  {\nu\over p+\nu}y_m
+\theta_{y_b}\left[\nu y_b-{\nu^2\over p+\nu} y_b\right]
\nonumber\\
&\quad
+\theta_m^\top \theta_{y_a}+\et.
\nonumber
\end{align}

Invoking the first five equations of \eqref{fil} and realizing that we can use the following substitutions
\begin{align}
{\nu\over p+\nu}y_m&=\xi_1 \nonumber \\
{\nu\over p+\nu}2Q_1y_m+{p\nu\over p+\nu} y_a&=\xi_2+\nu y_a \nonumber \\
\nu y_b-{\nu^2\over p+\nu} y_b&=\nu y_b-\nu \xi_3 \nonumber
\end{align}
\begin{align}
& {1\over p+\nu}\left[ \left[{\nu\over p+\nu}2Q_1y_m+{p\nu \over p+\nu} y_a\right]+ 2Q_1y_m\right]
\nonumber \\
& \quad - {\nu\over p+\nu}y_a = \xi_4
\nonumber \\
& {1\over p+\nu}\left[y_m^\top  \left[{\nu\over p+\nu}2Q_1y_m+{p\nu\over p+\nu} y_a\right] \right.
\nonumber \\
& \quad \bigg. +\nu^2 y_c-\nu y_m^\top  y_a\bigg] = \xi_5,
\nonumber 
\end{align}
we obtain
\begin{align}
\xi_5-\nu y_c&=\lambda^\top [\xi_2+\nu y_a]+\lambda^\top \,2Q_1\theta_m-\theta_m^\top \xi_4
\nonumber \\
& \quad +\theta_{y_a}^\top \xi_1 +\theta_{y_b}^\top \nu[y_b-\xi_3]
\nonumber \\
& \quad + \theta_m^\top \left[\theta_{y_a}-{2Q_1 \over \nu} \theta_m\right]+\et.
\lab{dif6}
\end{align}

Now, apply the operator $p\over p+\nu$ to \eqref{dif6}
\begin{align}
& {p\over p+\nu}[\xi_5-\nu y_c]={p\over p+\nu}\lambda^\top [\xi_2+\nu y_a]
\nonumber \\
& \quad + {1\over p+\nu}\dot \lambda^\top \,2Q_1\theta_m-\theta_m^\top {p\over p+\nu}\xi_4+\theta_{y_a}^\top {p\over p+\nu}\xi_1
\nonumber\\
& \quad + \theta_{y_b}^\top {p\over p+\nu}\nu[y_b-\xi_3]
\nonumber\\
& \quad + {p\over p+\nu}\theta_m^\top \left[\theta_{y_a}-{2Q_1 \over \nu} \theta_m\right]+\et.
\lab{dif7}
\end{align}
Let us consider separately the term ${p\over p+\nu}\lambda^\top [\xi_2+\nu y_a]$
\begin{align}
& \frac{p}{p+\nu}\lambda^\top [\xi_2+\nu y_a]
\nonumber \\
& \quad =\frac{1}{p+\nu}\left[\dot \lambda^\top [\xi_2+\nu y_a]+\lambda^\top [\dot\xi_2+\nu \dot y_a]\right]
\nonumber \\
& \quad =\lambda^\top \left[{p\over p+\nu}[\xi_2+\nu y_a]\right]
\nonumber \\
& \qquad -{1\over p+\nu}\left[\dot \lambda^\top {p\over p+\nu}[\xi_2+\nu y_a] -\dot \lambda^\top [\xi_2+\nu y_a] \right]\nonumber\\
& \quad =\lambda^\top \left[{p\over p+\nu}[\xi_2+\nu y_a]\right]
\nonumber \\
& \qquad +{1 \over p+\nu}\left[\dot  \lambda^\top {\nu \over p+\nu}[\xi_2+\nu y_a] \right]
\nonumber\\
& \quad =\lambda^\top \left[\xi_2+\nu y_a-{\nu\over p+\nu}[\xi_2+\nu y_a]\right]
\nonumber \\
& \qquad +{1 \over p+\nu}\left[[y_m+\theta_m]^\top {\nu \over p+\nu}[\xi_2+\nu y_a] \right]\nonumber\\
& \quad =\lambda^\top \left[\xi_2+\nu y_a-{\nu \over p+\nu}[\xi_2+\nu y_a]\right]
\nonumber \\
& \qquad +{1 \over p+\nu} y_m^\top {\nu \over p+\nu} [\xi_2+\nu y_a]
\nonumber\\
& \qquad+\theta_m^\top {1 \over p+\nu}{\nu \over p+\nu}[\xi_2+\nu y_a],
\nonumber
\end{align}
and realize that
\beals{
{\nu \over p+\nu}[\xi_2+\nu y_a]=\nu\xi_4-\xi_2.
}
Hence, we can rewrite \eqref{dif7} as
\begin{align}
& {p\over p+\nu}[\xi_5-\nu y_c]=
\lambda^\top  [\xi_2+y_a-\nu\xi_4+\xi_2]
\nonumber\\
& \quad +{1 \over p+\nu}y_m^\top [\nu\xi_4-\xi_2] +\theta_m^\top{1 \over p+\nu}[\nu\xi_4-\xi_2]
\nonumber\\
&\quad+{1\over p+\nu}[y_m+\theta_m]^\top \, 2Q_1\theta_m-
\theta_m^\top \left[\xi_4-{\nu\over p+\nu}\xi_4\right]
\nonumber\\
&\quad+
\theta_{y_a}^\top \left[\xi_1-{\nu\over p+\nu}\xi_1\right]
\nonumber\\
&\quad +\theta_{y_b}^\top \left[\nu[y_b-\xi_3]-{\nu\over p+\nu}\nu[y_b-\xi_3]\right]+\et.
\nonumber
\end{align}

The proof is completed considering the four latter filters of \eqref{fil} to get
\begin{align}
& \xi_5-\nu y_c-\xi_9=
\lambda^\top  [2\xi_2+y_a-\nu\xi_4]
\nonumber \\
& \quad +
\theta_m^\top{2 \over p+\nu} [\nu\xi_4-\xi_2] + \theta_{y_a}^\top [\xi_1-\xi_7]
\nonumber\\
&\quad+\theta_{y_b}^\top \nu[y_b-\xi_3-\xi_8]+ {2 \over \nu}\theta_m^\top Q_1\theta_m+\et,
\nonumber
\end{align}
which is, precisely, the linear regression model \eqref{linear0} with the definitions \eqref{yphiphi} and  \eqref{the}.


\begin{thebibliography}{12}
%
\bibitem{AND}
B. D. O. Anderson, Exponential stability of linear equations arising in adaptive identification, {\em IEEE Trans. Autom. Control}, vol. 22, no. 1, pp. 83-88, 1977.

\bibitem{BARORT}
N. Barabanov and R. Ortega, On global asymptotic stability of $\dot x = \phi(t) \phi^\top (t)x$ with $\phi(t)$ bounded and not persistently exciting,  {\em 56th IEEE Conference on Decision and Control}, Melbourne, Australia, 12-15/12, 2017.  (To appear in \SCL.)

\bibitem{BERPRA}
P. Bernard and L. Praly, Robustness of rotor position observer for permanent magnet synchronous motors with unknown magnet flux. {\em IFAC 2017 World Congress}, Toulouse, France, 9-14/7, 2017.

\bibitem{BOBetalijacsp} 
A. Bobtsov, A. Pyrkin and R. Ortega, A new approach for  estimation of electrical parameters and flux observation of permanent magnet synchronous motors, {\it Int. J. on Adaptive Control and Signal Processing}, vol, 30, no. 8-10, pp. 1434-1448, 2016.

\bibitem{BOBetalaut}
A. Bobtsov, A. Pyrkin, R. Ortega, S. Vukosavic, A. M. Stankovic and E. V. Panteley, A robust globally convergent position observer for the permanent magnet synchronous motor, {\em Automatica}, vol. 61, pp. 47-54, 2015.

\bibitem{BOBetalmaglev}  
A. Bobtsov, A. Pyrkin, R. Ortega and A. Vedyakov, Sensorless control of nonlinear magnetic levitation systems, \AUT, (submitted). 

\bibitem{CHOetal}
J. Choi, K. Nam, A. Bobtsov, A. Pyrkin and R. Ortega, Robust adaptive sensorless control for permanent magnet synchronous motors, {\em IEEE Transactions on Power Electronics}, vol. 32, pp. 3989-3997, 2017.

\bibitem{EFIFRA}
D. Efimov and A. Fradkov, Design of impulsive adaptive observers for improvement of persistency of excitation, {\em Int. J. Adapt. Control and Signal Processing}, vol. 29, pp. 765-782, 2015.

\bibitem{FORetal}
F. Forte, L. Marconi and A. R. Teel, Robust nonlinear regulation: Continuous-time internal models and hybrid identifiers, \TAC, vol. 62, no. 7, pp. 3136-3151. July 2017

\bibitem{KNO}
C. Knospe,  The nonlinear control benchmark experiment, {\em American Control Conference}, Chicago, Illinois, USA, June 2000.

\bibitem{KRA}
N.N. Krasovskii, {\em Stability of Motion}, Stanford, CA, Stanford Univ. Press, 1963 (in Russian: 1959).

\bibitem{LEO}
W. Leonhard, {\em Control of Electrical Drives}, Springer Science \& Business Media, 2012.

\bibitem{LIUetal}
X. Z. Liu, G. Verghese, J. Lang and M. Onder, Generalizing the Blondel-Park transformation of electrical machines: Necessary and sufficient conditions. {\em IEEE Transactions on Circuits and Systems}, vol. 36, no. 8, pp. 1058-1067, 1989.

\bibitem{LJU}
L. Ljung, {\em System Identification Toolbox: User's Guide}, MathWorks Incorporated, 1995.

\bibitem{MALetal}
J. Malaize, L. Praly and N. Henwood, Globally convergent nonlinear observer for the sensorless control of surface-mount permanent magnet synchronous machines, {\em 51st IEEE Conference on Decision and Control (CDC2012)}, Maui, Hawaii, USA, pp. 5900-5905, 2012

\bibitem{MEI}
J. Meisel, {\em Principles of Electromechanical Energy Conversion}, Krieger Pub Co, 1984. 

\bibitem{NAM}
K. Nam, {\em AC Motor Control and Electric Vehicle Applications}, CRC Press, 2010. 

\bibitem{ORTbook}
R. Ortega, A. Loria, P. J. Nicklasson and H. Sira--Ramirez, {\em Passivity--Based Control of  Euler--Lagrange Systems},   Springer-Verlag, Berlin, Communications and Control Engineering, 1998.

\bibitem{ORTetalcst} 
R. Ortega, L. Praly, A. Astolfi, J. Lee and K. Nam, Estimation of rotor position and speed of permanent magnet synchronous motors with guaranteed stability, {\em IEEE Trans. Control Syst. Technology}, vol. 19, No. 3, pp. 601--614, May 2011.

\bibitem{ORTetalscl}
R. Ortega, A. Bobtsov, A. Pyrkin and S. Aranovskiy, A parameter estimation approach to state observation of nonlinear systems, {\em Systems \& Control Letters}, vol. 85, pp. 84-94, 2015.

\bibitem{ORTBOB}
R. Ortega and A. Bobtsov, Sensorless control of switched reluctance motors, {\em LSS-CentraleSupelec, Int. Report}, Gif-sur-Yvette, France, September 15, 2017. 

\bibitem{PYRetal}
A. Pyrkin, F. Mancilla, R. Ortega, A. Bobtsov and S. Aranovskiy, Identification of photovoltaic arrays' maximum power extraction point via dynamic regressor extension and mixing, \IJACSP, Vol. 39, No. 9, pp. 1337-1349, 2017.

\bibitem{RODetal} 
H. Rodriguez, R. Ortega and I. Mareels,  A novel passivity--based controller for an active magnetic bearing benchmark experiment, {\em ACC 2000}, Chicago, June 2000.

\bibitem{ROMetal}
J. G. Romero, R. Ortega,  Z. Han, T. Devos and F. Malrait, An adaptive flux observer for the  permanent magnet synchronous motor, {\it Int. J. on Adaptive Control and Signal Processing}, vol. 30, No. 3, pp. 473-487, 2016.

\bibitem{SASBOD}
S. Sastry and M. Bodson, {\em Adaptive Control: Stability, Convergence and Robustness}, Prentice-Hall, London, 1989.

\bibitem{SCHMAS}
G. Schweitzer and E. Maslen (eds.), {\em Magnetic Bearings: Theory, Design and Application to Rotating machinery}, Springer-Verlag, Heidelberg, 2009.

\bibitem{STRDUI}
S. Stramigioli and V. Duindam (eds.), {\em Modeling and Control of Complex Physical Systems: The Port Hamiltonian Approach}, Geoplex Consortium, Springer-Verlag, Berlin, Communications and Control Engineering, 2009.

\bibitem{TOMVER}
P. Tomei and C. Verrelli, A nonlinear adaptive speed tracking control for sensorless permanent magnet step motors, {\em Int. J. Adapt. Control Signal Process.}, vol. 22, pp. 266-288, 2008.

\bibitem{VERbook}
R. Marino, T. Patrizio, and C.\,M. Verrelli. Induction motor control design. Springer Science \& Business Media, 2010.

\bibitem{VERetal}
C. M, Verrelli, P. Tomei, E. Lorenzani, G. Migliazza and F. Immovilli, Nonlinear tracking control for sensorless permanent magnet synchronous motors with uncertainties, {\em Control Engineering Practice}, vol. 60, pp. 157-170, 2017.

\end{thebibliography}
\end{document}